\documentclass[lettersize,journal]{IEEEtran}
\usepackage{amsmath,amsfonts}
\usepackage{booktabs}
\usepackage{cite}
\usepackage{array}
\usepackage{tikz}
\usetikzlibrary{decorations.pathreplacing}
\usepackage{textcomp}
\usepackage{amsthm}
\usepackage[english]{babel}
\newtheorem{theorem}{Theorem}
\newtheorem{lemma}{Lemma}
\newtheorem{corollary}{Corollary}
\newtheorem{remark}{Remark}
\newtheorem{definition}{Definition}
\newtheorem{assumption}{Assumption}
\usepackage{stfloats}
\usepackage{graphicx}
\usepackage{xcolor}
\definecolor{my1}{RGB}{188, 100, 100}
\definecolor{my2}{RGB}{188, 188, 100}
\usepackage[caption=false,font=footnotesize]{subfig}
 
\usepackage{url}
\usepackage{verbatim}
\usepackage{graphicx}
\usepackage{amssymb}
\usepackage{algorithm2e}
\usepackage{amsmath}
\usepackage{algorithmic}
\usepackage{amsmath}

\renewcommand{\algorithmicrequire}{\textbf{Input:}}
\renewcommand{\algorithmicensure}{\textbf{Output:}}
\allowdisplaybreaks
\RestyleAlgo{ruled}
\def\BibTeX{{\rm B\kern-.05em{\sc i\kern-.025em b}\kern-.08em
    T\kern-.1667em\lower.7ex\hbox{E}\kern-.125emX}}
\usepackage{balance}
\usepackage{xcolor}
\usepackage{ifthen}
\usepackage{xcolor} 
\usepackage{soul}   

\usepackage{multirow} 
\definecolor{lightblue}{RGB}{173,216,230} 
\sethlcolor{lightblue}                    

\begin{document}
\title{Byzantine-Robust and Communication-Efficient Distributed Training: Compressive and Cyclic Gradient Coding}
\author{Chengxi Li,~\IEEEmembership{Member,~IEEE,}, Youssef Allouah, Rachid Guerraoui, Mikael Skoglund,~\IEEEmembership{Fellow, IEEE}, \\and Ming Xiao,~\IEEEmembership{Senior Member,~IEEE}
\thanks{The first author contributed the most to this work. The other authors are listed in alphabetical order by surname. C. Li, M. Xiao and M. Skoglund are with the Division of Information Science and Engineering, School of Electrical Engineering and Computer Science, KTH Royal Institute of Technology, 10044 Stockholm, Sweden. Youssef Allouah is with Stanford University, CA 94305, United States. Rachid Guerraoui is with School of Computer and Communication Sciences, the Swiss Federal Institute of Technology in Lausanne (EPFL), 1015 Lausanne, Switzerland. This work has been done when C. Li was a visiting postdoc researcher at EPFL under the MSCA Postdoctoral Fellowship 
 (e-mail: chengxli@kth.se; yallouah@stanford.edu; rachid.guerraoui@epfl.ch; skoglund@kth.se;mingx@kth.se;). Corresponding author: Chengxi Li.

This work was supported in part by MSCA Postdoctoral Fellowship under Grant 101146669, in part by Digital Futures, in part by Swedish Research Council under Grant 2021-04772 and Grant 2024-06464, and in part by the Swedish Innovation Agency (Vinnova) through the SweWIN
center (2023-00572).
}
}

\markboth{Journal of \LaTeX\ Class Files}%
{}

\maketitle

\begin{abstract}
In this paper, we study the problem of distributed training (DT) under Byzantine attacks with communication constraints. While prior work has developed various robust aggregation rules at the server to enhance robustness to Byzantine attacks, the existing methods suffer from a critical limitation in that the solution error does not diminish when the local gradients sent by different devices vary considerably, as a result of data heterogeneity among the subsets held by different devices. To overcome this limitation, we propose a novel DT method, cyc\textbf{l}ic gr\textbf{a}dient coding-based \textbf{D}T (LAD). In LAD, the server allocates the entire training dataset to the devices before training begins. In each iteration, it assigns computational tasks redundantly to the devices using cyclic gradient coding. Each honest device then computes local gradients on a fixed number of data subsets and encodes the local gradients before transmitting to the server. The server aggregates the coded vectors from the honest devices and the potentially incorrect messages from Byzantine devices using a robust aggregation rule. Leveraging the redundancy of computation across devices, the convergence performance of LAD is analytically characterized, demonstrating improved robustness against Byzantine attacks and significantly lower solution error.
Furthermore, we extend LAD to a communication-efficient variant, \textbf{com}pressive and cyc\textbf{l}ic gr\textbf{a}dient coding-based \textbf{D}T (Com-LAD), which further reduces communication overhead under constrained settings. Numerical results validate the effectiveness of the proposed methods in enhancing both Byzantine resilience and communication efficiency.
\end{abstract}

\begin{IEEEkeywords}
Byzantine attacks, Communication constraints, distributed training, gradient coding.
\end{IEEEkeywords}

\section{Introduction}
\IEEEPARstart{D}{istributed} training (DT) has become a very important paradigm of modern machine learning, which enables the training of large-scale models by leveraging the computational resources of multiple devices \cite{chen2021distributed}. 
Under the typical setting of DT, before training begins, the subsets in the training data are allocated to the devices without overlap \cite{langer2020distributed}. During each iteration, the server sends the global model to the devices, and the devices compute local gradients corresponding to its local training data, which are then transmitted back to the server. After aggregating the local gradients from the devices, the server updates the global model accordingly.
In the above process, the allocation of training data to devices determines the distribution of computational tasks across them, since in each iteration all the training data assigned to a single device are used to compute its local gradient.
By distributing the computational tasks across numerous devices, DT systems can train machine learning models using massive datasets with reduced training time and better scalability to increasingly complex tasks\footnote{The motivation for DT is to leverage the computational resources of various devices. This is quite different from the motivation of the well-known federated learning scheme, in which the training data are originally distributed across devices and training must therefore be performed in a distributed manner.}. However, the distributed nature of DT systems leads to significant challenges.

First, DT systems are vulnerable to Byzantine attacks, where Byzantine devices may send incorrect or adversarial information to the server, significantly degrading training performance. To mitigate this threat, numerous DT methods have been proposed to enhance Byzantine robustness. For instance, in \cite{blanchard2017machine,xie2018phocas,xia2019faba,chen2017distributed,yin2018byzantine,pillutla2022robust,luan2024robust}, various robust aggregation rules have been developed to replace the traditional averaging-based aggregation at the server. While these methods achieve significantly better Byzantine resilience compared to averaging-based aggregation, they still suffer from severe performance degradation and a non-diminishing solution error in the presence of data heterogeneity among the subsets of the training data\footnote{It is a common case where the subsets are heterogeneous. For instance, if the training data are collected from different geographic regions, and each subset corresponds to data from a specific region, there are distinct characteristics among the subsets due to local variations.}. This limitation arises because robust aggregation rules typically assume that messages sent by honest devices are very close to each other, while messages from Byzantine devices deviate significantly. However, under significant data heterogeneity, the messages from honest devices may naturally vary widely, making it difficult to distinguish between honest and incorrect messages. As a result, Byzantine devices can more easily mislead the aggregation at the server and the overall training process, undermining the robustness of the system.

Another line of work involves using gradient coding techniques, where the subsets of the training data are allocated to the devices redundantly before training. Based on that, in \cite{hong2024group,hofmeister2024byzantine,data2020data,chen2018draco}, devices send coded gradients to the server in each iteration during the training process, and the server can identify the Byzantine devices with the received messages. In this way, the true global gradient can be applied to update the global model, which is the same as the case where there are no Byzantine attacks. 
Compared with existing methods that rely on robust aggregation rules, the use of gradient coding techniques can fully mitigate the negative impact of Byzantine devices. This allows the solution error to be reduced to zero, which is an outcome unattainable with robust aggregation rules alone. However, this level of Byzantine robustness comes at the expense of a significantly increased computational burden on the devices.

The second challenge is that the communication resources in DT systems are always limited, indicating that it is desirable to reduce the communication overhead caused by transmission from the devices to the server. There is lots of existing work aiming to deal with the communication bottleneck in DT. For example, quantization, sparsification, and other compression techniques have been widely studied to reduce the size of transmitted messages from the devices \cite{jiang2018linear,shi2019understanding,wangni2018gradient,beznosikov2023biased,xie2024jointsq}. These methods typically focus on reducing the communication cost in settings without Byzantine attacks, where all devices are cooperative and honest. However, in the presence of Byzantine devices, directly applying these methods results in significantly degraded learning performance.

In the DT literature, Byzantine robustness and communication efficiency have been investigated as individual challenges. There is little work that addresses these two challenges simultaneously by combining the advantages of robust aggregation rules with communication compression techniques \cite{ghosh2021communication,zhu2023byzantine,rammal2024communication,gorbunov2022variance}. However, like prior work that designs robust aggregation rules at the server, the performance of these existing methods is inherently limited by data heterogeneity among subsets. In other words, in the presence of data heterogeneity, the messages sent by honest devices in the compressed domain differ significantly from one another, leading to a non-diminishing solution error and degraded training performance. 

\textbf{Contributions.} To overcome the limitations of the existing methods, in this paper, we propose a novel DT method, cyc\textbf{l}ic gr\textbf{a}dient coding-based \textbf{D}T (LAD). In LAD, before the training, the server allocates the entire training dataset, which consists of a number of subsets, to all devices. During the training, in each iteration, the server assigns computational tasks redundantly to the devices using cyclic gradient coding. Based on that, each device computes and encodes the local gradients corresponding to a fixed number of data subsets from the entire training dataset. For the honest devices, the coded vectors are sent to the server subsequently. The server aggregates the trustworthy vectors from the honest devices and the incorrect messages from the Byzantine devices using a robust aggregation rule to update the global model. Note that LAD is a meta-algorithm in which a vast number of existing robust aggregation rules can be incorporated and applied.
Furthermore, LAD is extended to a communication-efficient variant, \textbf{com}pressive and cyc\textbf{l}ic gr\textbf{a}dient coding-based \textbf{D}T (Com-LAD), where the communication overhead is further reduced by letting the devices to compress the coded vectors before the transmission. For the proposed methods, the convergence performance is analyzed and presented, which shows that the proposed methods attain improved robustness against Byzantine attacks and significantly lower solution error, compared with the methods that adopt the robust aggregation rules alone. 
Finally, we present various numerical results to demonstrate the superiority of the proposed methods in improving the Byzantine robustness and communication efficiency.

\section{Related Work}
\subsection{Byzantine-Robust DT}
To mitigate the threat of Byzantine attacks, numerous DT methods have been developed to enhance Byzantine robustness. 

\textbf{Robust aggregation rules.} Some of the existing methods aim to design robust aggregation rules to replace the averaging-based aggregation rule at the server under the typical setting.
For example, in \cite{blanchard2017machine}, the server aggregates the information sent by the devices using majority-based and squared-distance-based methods, selecting trustworthy vectors as those that minimize the distances to their neighbors. In \cite{xie2018phocas} and \cite{yin2018byzantine}, robust DT methods are developed based on trimmed mean and median aggregation rules at the server.
In \cite{xia2019faba}, outliers among the messages received by the server are removed, and only local gradients closer to the true gradients are retained. In \cite{chen2017distributed} and \cite{pillutla2022robust}, aggregation at the server is performed using the geometric median, ensuring that the individual contribution of each device is not revealed.
In \cite{luan2024robust}, a robust aggregation rule based on the maximum correntropy criterion is utilized, where the correntropy function serves as the similarity measure. In \cite{allouah2023fixing}, the authors propose an aggregation method based on nearest neighbor mixing, which achieves optimal Byzantine resilience under data heterogeneity. 
These robust aggregation rules offer improved Byzantine resilience compared to simple averaging. However, under data heterogeneity, when different subsets of the training dataset are not independent and identically distributed (non-IID), they suffer from a non-diminishing solution error, which increases significantly as the level of heterogeneity grows.

\textbf{Gradient coding.} There is also another line of work dealing with Byzantine attacks by using gradient coding techniques, where the subsets of the training data are allocated to the devices redundantly before training and this redundancy is utilized during the training to enhance Byzantine robustness. 
For instance, for matrix multiplication tasks in DT, in \cite{hong2024group}, coding techniques are integrated with group-wise verification to counter Byzantine attacks. For the same problem, \cite{data2020data} proposes an error-correction-based approach for matrix-vector multiplication under Byzantine attacks, which is information-theoretically optimal.  
For DT problems in a more general setting, in \cite{chen2018draco}, local gradients are encoded using fractional and cyclic repetition codes, with decoding performed via majority vote and Fourier techniques. \cite{hofmeister2024byzantine} employs fractional repetition codes to allocate the training subsets. Based on that, it can detect and transform messages from Byzantine devices into erasures with additional computational burden at the server and larger communication overhead during the training.

In contrast to robust aggregation rules, which result in a non-diminishing solution error under data heterogeneity, gradient coding techniques can reduce the solution error to zero. However, this enhanced Byzantine robustness comes at the expense of a substantially increased computational burden on the devices.

\subsection{Communication-Efficient DT}
When all devices are cooperative and honest, existing research proposes various communication compression techniques to cope with the communication bottleneck in DT. For example, in \cite{jiang2018linear}, an unbiased stochastic quantization function is adopted as the communication compression function, and its convergence rate is analyzed. In \cite{shi2019understanding,wangni2018gradient}, sparsified gradient vectors are transmitted by devices to reduce communication overhead. In \cite{beznosikov2023biased}, three classes of biased communication compression functions are developed, and a linear convergence rate is demonstrated. In \cite{xie2024jointsq}, a joint Sparsification-Quantization scheme is proposed for DT, which can be regarded as a mixed-precision quantization process. These methods are designed to reduce communication overhead in settings without Byzantine attacks, under the assumption that all devices are cooperative and honest.

\subsection{Byzantine-Robust and Communication-Efficient DT}

Byzantine robustness and communication efficiency in DT have been studied as individual yet significant topics. However, there is little work that simultaneously explores these two challenges in DT. The few existing studies addressing both are as follows. In \cite{ghosh2021communication}, a DT method is proposed that performs thresholding based on gradient norms at the server to remove messages from Byzantine devices, incorporating a generic class of 
$\delta$-approximate compression functions to enhance communication efficiency. In \cite{zhu2023byzantine}, a DT method is introduced that combines the advantages of communication compression techniques and the geometric median aggregation rule. In \cite{rammal2024communication} and \cite{gorbunov2022variance}, Byzantine-resilient DT methods are developed based on variance reduction with compressed communication.

However, when applying these methods, the devices transmit compressed messages derived from local gradients, and the server aggregates these messages using robust aggregation rules. Due to the inherent limitations of robust aggregation based on local gradient information, the learning performance of these methods is constrained by data heterogeneity among subsets, regardless of whether the aggregation is implemented in the compressed domain or not.

\section{Problem Model}
\label{problem model}
\subsection{DT Systems}
We consider a DT system, where there are $N$ devices and a central server with a training dataset $\mathcal{D}$. Here, the training dataset $\mathcal{D}$ is divided into $N$ subsets, represented as $\mathcal{D} = \{ \mathcal{D}_1, \dots, \mathcal{D}_N \}$. The goal is to train a model using the training dataset, which is equivalent to solving the following optimization problem:
\begin{align}
    \label{basic problem}
   {{\mathbf{x}}^*} = \arg {\min _{{\mathbf{x}} \in {\mathbb{R}^{Q\times1}}}}F\left( {\mathbf{x}} \right).
\end{align}
Here, ${\mathbf{x}}$ denotes the parameter vector in the trained model, and $F\left( {\mathbf{x}} \right)$ represents the training loss on the entire training dataset, defined as
\begin{align}
\label{overall loss}
F({\mathbf{x}}) = \sum\limits_{\varrho  \in \mathcal{D}} {l\left( {{\mathbf{x}},\varrho } \right)} ,
\end{align}
where \({l}\left( {{\mathbf{x}},{\varrho}} \right): \mathbb{R}^{Q\times1} \to \mathbb{R}\) is the training loss on data sample \({{\varrho}}\) in \({\mathcal{D}}\). 

Note that, we can equivalently express the optimization problem in (\ref{basic problem}) as:
\begin{align}
    \label{basic problem 2}
   {{\mathbf{x}}^*} = \arg {\min _{{\mathbf{x}} \in {\mathbb{R}^{Q\times1}}}}F\left( {\mathbf{x}} \right) = \arg {\min _{{\mathbf{x}} \in {\mathbb{R}^{Q\times1}}}}\sum\limits_{k = 1}^N {{f_k}\left( {\mathbf{x}} \right)}, 
\end{align}
where \(f_k({\mathbf{x}}): \mathbb{R}^{Q\times1} \to \mathbb{R}\) is the training loss on \({\mathcal{D}_k}\), given as 
\begin{align}
\label{fix}
{f_k}({\mathbf{x}}) = \sum\limits_{\varrho  \in {\mathcal{D}_k}} {l\left( {{\mathbf{x}},\varrho } \right)}.
\end{align}

To exploit the computational resources of the $N$ devices in the system, the subsets in $\mathcal{D}$ are allocated to the devices in a certain manner before the training \cite{langer2020distributed}. In training iteration $t$, the server broadcasts the global model $\mathbf{x}^t$ to all devices. Upon receiving the global model, each device computes the local gradient based on its local training data and transmits this gradient back to the server. The server aggregates the local gradients received from the devices to obtain the global gradient and updates the global model accordingly. 

\subsection{Byzantine Attacks}
In many practical scenarios, the DT system may be subject to Byzantine attacks, causing some devices to act as Byzantine devices \cite{guerraoui2024byzantine,allouah2023fixing}. In training iteration $t$, let $\mathcal{B}^t$ denote the set of indices of all Byzantine devices, and let $\mathcal{H}^t$ denote the set of indices of all honest devices. The sets $\mathcal{B}^t$ and $\mathcal{H}^t$ may remain fixed or vary across training iterations, and they are unknown to the server. During each iteration, the messages sent by honest devices are completely trustworthy, whereas those from Byzantine devices can be arbitrarily incorrect and misleading. Note that no further assumptions are required, provided that the messages sent by the Byzantine devices have finite norms. Here, it is assumed that, in each iteration, at most $N - H$ devices are Byzantine, where $H > \frac{N}{2}$. In this case, it is sufficient to consider the scenario where exactly $N - H$ devices are Byzantine, since the method and performance bound for this worst-case setting also apply when fewer than $N - H$ devices are Byzantine.

\subsection{Communication Bottleneck}
In the DT system, communication resources are always limited, such as bandwidth and network capacity \cite{beznosikov2023biased,xie2024jointsq}. To address this communication bottleneck, it is highly desirable to reduce the overhead caused by devices transmitting various messages to the server in each iteration. 

\subsection{Objective}
Based on the above descriptions of the DT system, the objective of the system designer is to develop DT methods that achieve better learning performance, even in the presence of Byzantine attacks, under certain communication constraints. In other words, the goal is to enhance the Byzantine resilience and communication efficiency of the DT system.

\section{Proposed Method without Compressed Communication}
\label{our method without compression}

In this section, we first introduce the proposed method based on cyc\textbf{l}ic gr\textbf{a}dient co\textbf{d}ing (LAD), to deal with the DT problem formulated in Section~\ref{problem model} under Byzantine attacks without considering communication bottlenecks. Based on LAD, the proposed method with compressed communication, which enhances the Byzantine resilience and communication efficiency simultaneously, will be introduced in Section~\ref{our method with compression}. 

Before training, all $N$ subsets are allocated to each device. In other words, each device obtains a copy of the entire training dataset. Based on this, let us define a computation task matrix $\mathbf{\hat{S}}$ of size $N \times N$, which is a cyclic matrix where each row is a cyclic shift of the previous one and $\hat s(i,k)$ denotes the $(i,k)$-th element. Without loss of generality, in the first row of $\mathbf{\hat{S}}$, the first $d$ elements are ones, while the remaining elements are zeros. In $\mathbf{\hat{S}}$, each row represents a computation task. For the $i$-th row, the computation task involves computations based on the $d$ subsets, which will explained in more details later. And we will show in Corollary~\ref{coro} in Section~\ref{convergence performance} that, by setting the computation task matrix as $\mathbf{\hat{S}}$, better robustness to Byzantine attacks can be attained.  

During training, in iteration $t$, the server broadcasts the current global model $\mathbf{x}^t$ to the devices. Simultaneously, the server generates a list of task indices ${\mathcal{T}_1^t, \dots, \mathcal{T}_N^t}$ as a random permutation of ${1, \dots, N}$, and then transmits the task index $\mathcal{T}_i^t$ to device $i$, $\forall i$. The server also generates another random permutation of ${1, \dots, N}$ independently from the generation of the task indices, denoted as \({{\mathbf{p}}^t} = \left[ {p_1^t,...,p_N^t} \right]\), and broadcasts $\mathbf{p}^t$ to all devices. Note that, both the generation of the task indices and the generation of the vector ${{\mathbf{p}}^t}$ are independent across iterations.   

Upon receiving the global model, device $i$ computes the local gradients based on the received task index $\mathcal{T}_i^t$ and ${{\mathbf{p}}^t}$, computing only the local gradients $\{ \nabla {f_{p_k^t}}({{\mathbf{x}}^t})|\hat s(\mathcal{T}_i^t,k) = 1\}$. Based on this, $d$ represents the computational load on each device in a single iteration, where a larger value of $d$ implies a higher computational load.

\begin{remark}
    Note that in LAD, the computation tasks assigned to devices are redundant, as the local gradient corresponding to any subset can be computed by multiple devices. This setup resembles the scenario in conventional gradient coding techniques, which were originally proposed to address the straggler problem in DT. However, the motivations are distinct. In conventional gradient coding, redundant computations are exploited to compensate for the missing information from stragglers by leveraging the results from non-stragglers. In contrast, in LAD, redundancy is used to reduce the variance among messages sent by honest devices, thereby enhancing robustness against Byzantine attacks, which will be shown in Section~\ref{performance analysis}. 
\end{remark}

Subsequently, device $i$ encodes the local gradients into a coded vector, given as
\begin{align}
    \label{encoding}
    {\mathbf{g}}_i^t = \sum\limits_{k \in \left\{ {\left. k \right|\hat s\left( {\mathcal{T}_i^t,k} \right) = 1} \right\}} {\frac{1}{d}} \nabla {f_{p_k^t}}\left( {{{\mathbf{x}}^t}} \right).
\end{align}

If device $i$ is an honest device, where $i \in \mathcal{H}^t$, it sends ${\mathbf{ g}}_i^t$ to the server. Otherwise, if $i \in \mathcal{B}^t$, device $i$ transmits $\mathbf{b}_i^t \in \mathbb{R}^{Q \times 1}$ to the server, where $\mathbf{b}_i^t$ is an incorrect message compared to ${\mathbf{g}}_i^t$.

Upon receiving messages from all devices, the server employs a robust aggregation rule, denoted as $agg(\cdot)$, to obtain the global model update. Considering that $\kappa$-robustness provides a unified and sufficiently fine-grained characterization of various robustness criteria, the server employs a robust aggregation rule with $\kappa$-robustness. As in \cite{allouah2023fixing}, the definition of $\kappa$-robustness is given as follows. 

\begin{definition}[{\bf $\kappa$-robustness}]
\label{def:resaveraging}
Suppose $\frac{N-H}{N} < \frac{1}{2}$ and $\kappa \geq 0$, where $N$ and $H$ are two integers with $N>H$. An aggregation rule $\text{agg}(\cdot)$ is said to satisfy $\kappa$-robustness if, for any $H$ vectors $\{\mathbf{z}_1, \dots, \mathbf{z}_{H}\}$ and $N-H$ vectors $\{\tilde{\mathbf{z}}_1, \dots, \tilde{\mathbf{z}}_{N-H}\}$ of finite norms, all in $\mathbb{R}^{Q \times 1}$, the following holds:
\begin{align*}
    \left\| \text{agg}\left( \left\{ \mathbf{z}_i \right\}_{i = 1}^{H}, \left\{ \tilde{\mathbf{z}}_j \right\}_{j = 1}^{N-H} \right) - \bar{\mathbf{z}} \right\|^2 
    \leq \kappa \cdot \frac{1}{H} \sum_{i = 1}^{H} \left\| \mathbf{z}_i - \bar{\mathbf{z}} \right\|^2,
\end{align*}
where $\bar{\mathbf{z}} = \frac{1}{H} \sum_{i=1}^{H} \mathbf{z}_i$. Here, $\kappa$ is the \emph{robustness coefficient}.
\end{definition}

Based on that, the aggregation at the server can be expressed as
\begin{align}
    \label{global model update wout com}
    {{\mathbf{g}}^t} = agg\left( {{{\left\{ {{\mathbf{g}}_i^t} \right\}}_{i \in {\mathcal{H}^t}}},{{\left\{ {{\mathbf{b}}_j^t} \right\}}_{j \in {\mathcal{B}^t}}}} \right).
\end{align}
At the end of iteration $t$, the server updates the global model as 
\begin{align}
    \label{update global model wout com}
    {{\mathbf{x}}^{t + 1}} = {{\mathbf{x}}^t} - {\gamma ^t}{{\mathbf{g}}^t},
\end{align}
where $\gamma ^t$ denotes the learning rate. The proposed method is presented in Algorithm 1, and the implementation of a single iteration is shown as Fig.~\ref{fig: method}.
\begin{figure*}
    \centering
    \includegraphics[width=0.8\linewidth]{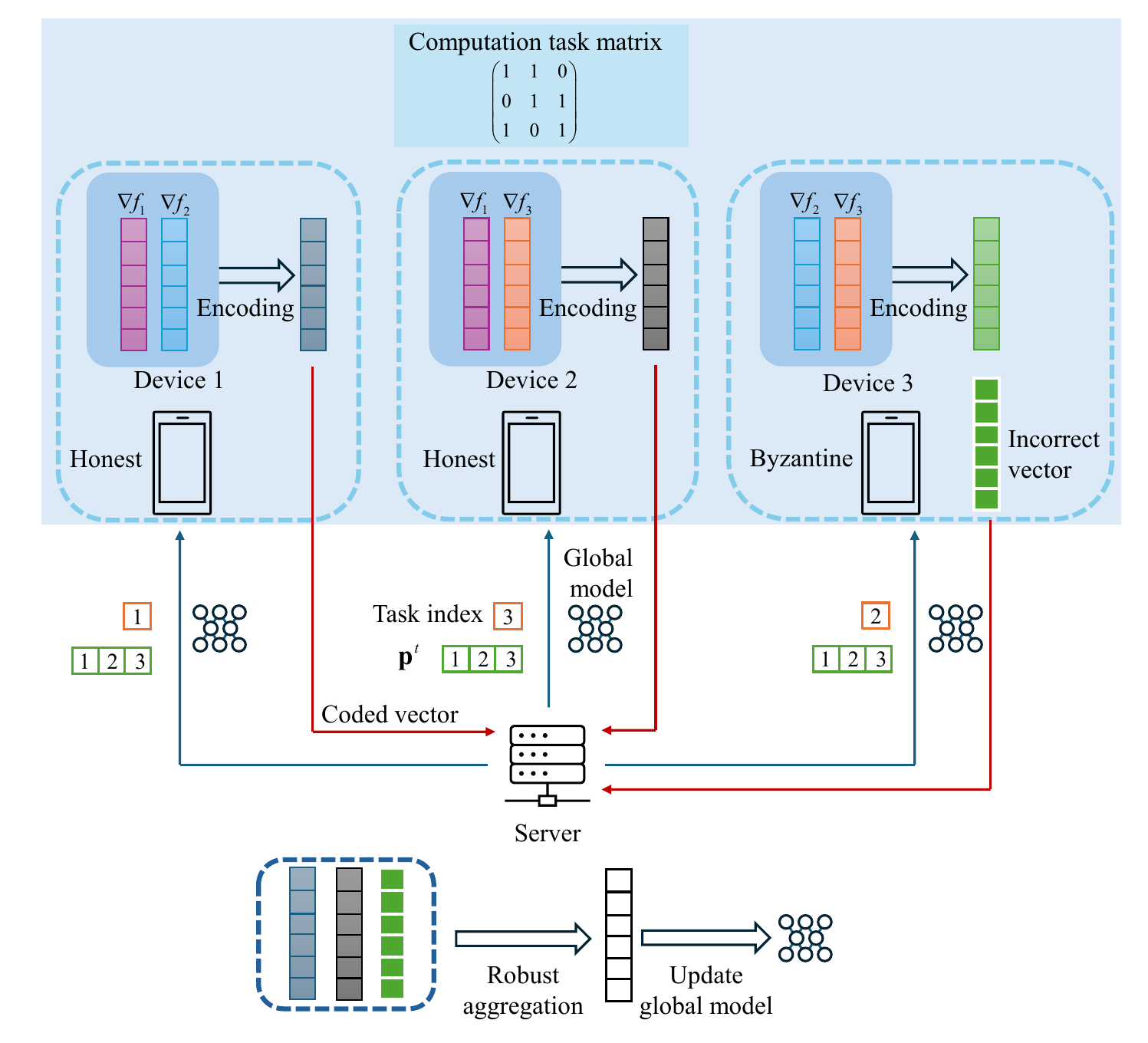}
    \caption{The implementation of a single iteration of LAD.}
    \label{fig: method}
\end{figure*}

\begin{algorithm}
\label{alg1}
\caption{LAD}
\KwIn{Training subsets $\mathcal{D} = \{\mathcal{D}_1, \dots, \mathcal{D}_N\}$, learning rate $\{\gamma^t\}$, computational load $d$, number of iterations $T$.}
\KwOut{Global model $\mathbf{x}^T$.}

\textbf{Initialization:} Initialize global model $\mathbf{x}^0$.

\For{$t = 0$ \KwTo $T-1$}{
    Server broadcasts current global model $\mathbf{x}^t$ to all devices.\\
    Generate a list of task indices $\{\mathcal{T}_1^t,\dots,\mathcal{T}_N^t\}$ as a random permutation of $\{1,\dots,N\}$.\\
    Generate ${{\mathbf{p}}^t}$ as a random permutation of $\{1,\dots,N\}$.\\
    Transmit task index $\mathcal{T}_i^t \in \{1,\dots,N\}$ to device $i$, $\forall i$.\\
    Broadcast ${{\mathbf{p}}^t}$ to all devices. 

    \For{each device $i$ in parallel}{
        Compute local gradients: $\{ \nabla {f_{p_k^t}}({{\mathbf{x}}^t})|\hat s(\mathcal{T}_i^t,k) = 1\}$. \\
        Encode local gradients into coded vector ${\mathbf{g}}_i^t$ as (\ref{encoding}). \\
        \eIf{$i \in \mathcal{H}^t$ (honest)}{
            Send $\mathbf{{g}}_i^t$ to the server.
        }{
            Send incorrect vector $\mathbf{{b}}_i^t$ to the server.
        }
    }

    Server receives $\{\mathbf{{g}}_i^t\}_{i \in \mathcal{H}^t}$ and $\{\mathbf{{b}}_j^t\}_{j \in \mathcal{B}^t}$. \\
    Aggregate received messages using a robust aggregation rule as (\ref{global model update wout com}) to obtain ${{\mathbf{g}}^t}$. \\
    Update global model as (\ref{update global model wout com}).
}

\Return{$\mathbf{x}^T$}
\end{algorithm}

\section{Proposed Method with Compressed Communication}
\label{our method with compression}

In this section, we extend the proposed method in Section~\ref{our method without compression} and propose a DT method based on \textbf{com}pressive and cyc\textbf{l}ic gr\textbf{a}dient co\textbf{d}ing (Com-LAD), to deal with the problem formulated in Section~\ref{problem model} under both challenges of Byzantine attacks and communication bottlenecks. 

Before training, the subsets are allocated to each device, as in LAD. The same computation task matrix $\mathbf{\hat{S}}$ is defined and used as in LAD. During training, in iteration $t$, the server sends the global model $\mathbf{x}^t$ and the vector $\mathbf{p}^t$ to the devices and transmits the random task index $\mathcal{T}_i^t$ to device $i$ for all $i$, following the same procedure as in LAD. Upon receiving the global model, device $i$ computes the local gradients and encodes them as in (\ref{encoding}).

For device $i$, to reduce the communication overhead caused by uploading to the server, before transmission to the server, it employs a compression function $\mathcal{C}: \mathbb{R}^{Q\times1} \to \mathbb{R}^{Q\times1}$ to compress ${\mathbf{g}}_i^t$ into ${\mathbf{\hat g}}_i^t$, where
\begin{align}
    \label{compress}
    {\mathbf{\hat g}}_i^t = \mathcal{C}\left( {\mathbf{g}}_i^t \right).
\end{align}
There are various commonly used compression functions, which can be either biased or unbiased. In this work, we adopt unbiased compression functions, as they generally achieve more stable convergence under typical DT settings. The definition of unbiased compression functions is given as follows \cite{he2023unbiased,condat2022ef}.

\begin{definition}[Unbiased compression functions]
\label{def_unbiased_com}
A compression function $\mathcal{C}: \mathbb{R}^{Q\times1} \to \mathbb{R}^{Q\times1}$ is an unbiased compression function, if it holds that
\begin{align}
    \label{def unbiased}
    \mathbb{E}\left[ {\mathcal{C}\left( {\mathbf{g}} \right)} \right] &= {\mathbf{g}},\\
    \label{def unbiased bound}
    \mathbb{E}\left[ {{{\left\| {\mathcal{C}\left( {\mathbf{g}} \right) - {\mathbf{g}}} \right\|}^2}} \right] &\leqslant \delta {\left\| {\mathbf{g}} \right\|^2}, \forall \mathbf{g}.
\end{align}
\end{definition}
Here, in (\ref{def unbiased bound}), $\delta$ is a non-negative constant. As a special case, when $\delta = 0$, there is no communication compression. As $\delta$ increases, the compression error becomes larger.

Note that although the length of the vectors as the input and output of the compression function is the same, the communication overhead caused by transmitting $\mathcal{C}(\mathbf{g})$ is reduced compared to that of transmitting $\mathbf{g}$, since the former requires fewer bits.
Some commonly used unbiased compression functions are listed as follows:
\begin{itemize}
    \item Stochastic quantization \cite{alistarh2017qsgd}. For each element in $\mathbf{g}$, denoted by $g_q \in [a, b]$ for $q = 1, \dots, Q$, it is quantized to $a$ with probability $\frac{b - g_q}{b - a}$ and to $b$ with probability $\frac{g_q - a}{b - a}$.
    \item Random sparsification \cite{wangni2018gradient}. For any $\mathbf{g}$, randomly select $\hat{Q}$ elements and scale them by a factor of $\frac{Q}{\hat{Q}}$, while setting the remaining elements to zero. 
\end{itemize}

If device $i$ is honest, it sends ${\mathbf{\hat g}}_i^t$ to the server. Otherwise, device $i$ transmits $\mathbf{\hat b}_i^t \in \mathbb{R}^{Q \times 1}$ to the server. Here, $\mathbf{\hat b}_i^t$ is an incorrect message compared to ${\mathbf{\hat g}}_i^t$. 
Upon receiving messages from all devices, the server employs $agg(\cdot)$ with $\kappa$-robustness to generate the global model update as  
\begin{align}
    \label{global model update}
    {{\mathbf{\hat g}}^t} = agg\left( {{{\left\{ {{\mathbf{\hat g}}_i^t} \right\}}_{i \in {\mathcal{H}^t}}},{{\left\{ {{\mathbf{\hat b}}_j^t} \right\}}_{j \in {\mathcal{B}^t}}}} \right).
\end{align}
At the end of iteration $t$, global model is updated at the server: 
\begin{align}
    \label{update global model}
    {{\mathbf{x}}^{t + 1}} = {{\mathbf{x}}^t} - {\gamma ^t}{{\mathbf{\hat g}}^t},
\end{align}
where $\gamma ^t$ is the learning rate. We present the proposed method in Algorithm 2. 

\begin{algorithm}
\label{alg2}
\caption{Com-LAD}
\KwIn{Training subsets $\mathcal{D} = \{\mathcal{D}_1, \dots, \mathcal{D}_N\}$, learning rate $\{\gamma^t\}$, computational load $d$, number of iterations $T$.}
\KwOut{Global model $\mathbf{x}^T$.}

\textbf{Initialization:} Initialize global model $\mathbf{x}^0$.

\For{$t = 0$ \KwTo $T-1$}{
    Server broadcasts current global model $\mathbf{x}^t$ to all devices.\\
    Generate a list of task indices $\{\mathcal{T}_1^t,\dots,\mathcal{T}_N^t\}$ as a random permutation of $\{1,\dots,N\}$.\\
    Generate ${{\mathbf{p}}^t}$ as a random permutation of $\{1,\dots,N\}$.\\
    Transmit task index $\mathcal{T}_i^t \in \{1,\dots,N\}$ to device $i$, $\forall i$.\\
    Broadcast ${{\mathbf{p}}^t}$ to all devices. 

    \For{each device $i$ in parallel}{
        Compute local gradients: $\{ \nabla {f_{p_k^t}}({{\mathbf{x}}^t})|\hat s(\mathcal{T}_i^t,k) = 1\}$. \\
        Encode local gradients into coded vector ${\mathbf{g}}_i^t$ as (\ref{encoding}). \\
        Compress coded vector as ${\mathbf{\hat g}}_i^t = \mathcal{C}\left( {\mathbf{g}}_i^t \right)$. \\
        \eIf{$i \in \mathcal{H}^t$ (honest)}{
            Send $\mathbf{\hat{g}}_i^t$ to the server.
        }{
            Send incorrect vector $\mathbf{\hat{b}}_i^t$ to the server.
        }
    }

    Server receives $\{\mathbf{\hat{g}}_i^t\}_{i \in \mathcal{H}^t}$ and $\{\mathbf{\hat{b}}_j^t\}_{j \in \mathcal{B}^t}$. \\
    Aggregate received messages using a robust aggregation rule as (\ref{global model update}) to obtain ${{\mathbf{\hat g}}^t}$. \\
    Update global model as (\ref{update global model}).
}

\Return{$\mathbf{x}^T$}
\end{algorithm}

\section{Convergence Analysis}
\label{performance analysis}
In this section, we analyze the convergence performance of LAD and Com-LAD. To this end, some assumptions are made, which have been widely used under various DT settings.  
\begin{assumption}
    \label{smooth assumption}
The training loss function $F$ is an $L$-smooth function \cite{jadbabaie2023federated,allouah2024fine}, implying  
    \begin{align}
        \label{smooth assum}
  F\left( {\mathbf{x}} \right) \leqslant F\left( {\mathbf{y}} \right) + \left\langle {\nabla F\left( {\mathbf{y}} \right),{\mathbf{x}} - {\mathbf{y}}} \right\rangle  + \frac{L}{2}{\left\| {{\mathbf{x}} - {\mathbf{y}}} \right\|^2},\forall {\mathbf{x}},{\mathbf{y}},
  \end{align}
where $L>0$ is a constant. 
\end{assumption}
\begin{assumption}
    \label{assp bounded heter}
    It holds that \cite{dong2023byzantine}
    \begin{align}
        \label{bounded heterogenity}
        \frac{1}{N}\sum\limits_{i = 1}^N {{{\left\| {\nabla {f_i}\left( {\mathbf{x}} \right) - {{\boldsymbol{\mu }}^t}} \right\|}^2}} \leqslant {\beta ^2},\forall {\mathbf{x}},
    \end{align}
where 
\begin{align}
    \label{def mu}
  \boldsymbol{\mu}^t = \frac{1}{N}\nabla F\left( {{{\mathbf{x}}^t}} \right).
\end{align}
This indicates that the heterogeneity among $\left\{ {{\mathcal{D}_1},...,{\mathcal{D}_N}} \right\}$ is upper bounded. 
\end{assumption}
\begin{assumption}
    \label{lower bound}
    There exists a constant ${{F^*}}$ acting as the lower bound of the training loss function \cite{horvath2023stochastic}, where
    \begin{align}
        \label{lower bound of F}
        F\left( {\mathbf{x}} \right) \geqslant {F^*},\forall \mathbf{x}.
    \end{align}
\end{assumption}
Based on these assumptions, we provide several lemmas to support the derivation of the main theorems that characterize the convergence performance of LAD and Com-LAD. 
\begin{lemma}
\label{lemma1}
Suppose the set $\mathcal{S}$ contains all matrices in which each row has exactly $d$ entries equal to one, and all other entries are zero. It holds that 
\begin{align}
    \label{S bound}
   & \mathbb{E}\left( {{{\left\| {\left( {\frac{1}{d}\frac{1}{{H}}{{\mathbf{h}}^{1 \times N}}{\mathbf{\hat S}} - \frac{1}{N}{{\mathbf{1}}^{1 \times N}}} \right)} \right\|}^2}} \right)\nonumber \\
    = &{\inf _{{\mathbf{S}} \in \mathcal{S}}}\mathbb{E}\left( {{{\left\| {\left( {\frac{1}{d}\frac{1}{{H}}{{\mathbf{h}}^{1 \times N}}{\mathbf{S}} - \frac{1}{N}{{\mathbf{1}}^{1 \times N}}} \right)} \right\|}^2}} \right)\nonumber\\
    =&\frac{{\left( {N - H} \right)\left( {N - d} \right)}}{{dH\left( {N - 1} \right)N}}. 
\end{align}
Here, $\mathbf{h}^{1 \times N}$ is a random vector where exactly $H$ elements are randomly and uniformly selected to be ones, and the remaining elements are zeros.
\end{lemma}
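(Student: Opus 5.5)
The plan is to compute the expectation in closed form for an arbitrary $\mathbf{S}\in\mathcal{S}$. This reduces the infimum to a simple combinatorial optimisation over the column sums of $\mathbf{S}$, which the cyclic matrix $\mathbf{\hat S}$ solves. Write $\mathbf{v}=\frac{1}{dH}\mathbf{h}^{1\times N}\mathbf{S}-\frac{1}{N}\mathbf{1}^{1\times N}$ and expand
\begin{align*}
\|\mathbf{v}\|^2=\frac{1}{d^2H^2}\mathbf{h}\mathbf{S}\mathbf{S}^{\top}\mathbf{h}^{\top}-\frac{2}{dHN}\mathbf{h}\mathbf{S}\mathbf{1}^{\top}+\frac{1}{N}.
\end{align*}
Every row of $\mathbf{S}$ sums to $d$, so $\mathbf{S}\mathbf{1}^{\top}=d\mathbf{1}^{\top}$. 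Since $\mathbf{h}$ has exactly $H$ ones, the cross term is deterministic and equals $\frac{2}{N}$. Hence $\mathbb{E}\|\mathbf{v}\|^2=\frac{1}{d^2H^2}\mathbb{E}[\mathbf{h}\mathbf{S}\mathbf{S}^{\top}\mathbf{h}^{\top}]-\frac{1}{N}$, and only the quadratic term depends on $\mathbf{S}$.

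Next I would compute the second-moment matrix of $\mathbf{h}$. The vector $\mathbf{h}$ is uniform over $H$-subsets, so $\mathbb{E}[h_i^2]=\frac{H}{N}$ and $\mathbb{E}[h_ih_j]=\frac{H(H-1)}{N(N-1)}$ for $i\neq j$. Therefore $\mathbb{E}[\mathbf{h}^{\top}\mathbf{h}]=a\mathbf{I}+b\mathbf{1}^{\top}\mathbf{1}$ with
\begin{align*}
b=\frac{H(H-1)}{N(N-1)},\qquad a=\frac{H}{N}-b=\frac{H(N-H)}{N(N-1)}.
\end{align*}
Taking the trace against $\mathbf{S}\mathbf{S}^{\top}$ gives
\begin{align*}
\mathbb{E}[\mathbf{h}\mathbf{S}\mathbf{S}^{\top}\mathbf{h}^{\top}]=a\|\mathbf{S}\|_F^2+b\|\mathbf{1}\mathbf{S}\|^2.
\end{align*}
The entries of $\mathbf{S}$ are binary with row sums $d$, so $\|\mathbf{S}\|_F^2=Nd$ for every $\mathbf{S}\in\mathcal{S}$. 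The only freedom is therefore in $\|\mathbf{1}\mathbf{S}\|^2=\sum_k c_k^2$, where $c_k$ denotes the $k$-th column sum.

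The key step is the minimisation. The column sums satisfy $\sum_k c_k=Nd$, so by Cauchy--Schwarz $\sum_k c_k^2\ge \frac{(Nd)^2}{N}=Nd^2$, with equality iff all $c_k=d$. The cyclic matrix $\mathbf{\hat S}$ has every column sum equal to $d$, since each column contains $d$ ones by the cyclic-shift structure. Thus $\mathbf{\hat S}$ attains the infimum, which is a minimum, and this proves the first equality. Note that the random permutation $\mathbf{p}^t$ plays no role here, as the statement concerns only $\mathbf{h}$.

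Finally, I would substitute $\|\mathbf{S}\|_F^2=Nd$ and $\sum_k c_k^2=Nd^2$ and simplify:
\begin{align*}
\frac{aNd+bNd^2}{d^2H^2}-\frac{1}{N}=\frac{N-H}{dH(N-1)}+\frac{H-1}{H(N-1)}-\frac{1}{N}.
\end{align*}
Over the common denominator $dH(N-1)N$, the numerator is $N(N-H)+dN(H-1)-dH(N-1)=N(N-H)-d(N-H)=(N-H)(N-d)$, which gives the claimed value.

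I expect no real obstacle. The only delicate points are recognising that the cross term is constant, because both $\mathbf{h}$ has exactly $H$ ones and every row of $\mathbf{S}$ sums to $d$, and recognising that $\|\mathbf{S}\|_F^2$ is fixed, which is where the binary structure of $\mathcal{S}$ is used.
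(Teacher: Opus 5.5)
Your proof is correct and follows essentially the same route as the paper's. Both arguments expand the square and note that the cross term is the constant $2/N$. Both then use $\mathbb{E}[h_i]=H/N$ and $\mathbb{E}[h_ih_k]=\frac{H(H-1)}{N(N-1)}$ to reduce the problem to minimizing the sum of squared column sums subject to $\sum_j \theta_j = dN$, which the cyclic $\mathbf{\hat S}$ attains by having every column sum equal to $d$. Your second-moment-matrix and trace formulation, with the explicit Cauchy--Schwarz step, is simply a more compact form of the paper's entrywise computation, and it makes the attainment argument slightly cleaner.
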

\begin{proof}
    Please see Appendix~\ref{appen lemma1} in the supplementary materials. 
\end{proof}
\begin{lemma}
    \label{lemma20}
    We can obtain
\begin{align}
    \label{h1}
    \mathbb{E}\left( {\left. {{{\left\| {{\mathbf{g}}_i^t - {{\boldsymbol{\mu }}^t}} \right\|}^2}} \right|{\mathcal{F}^t}} \right) \leqslant \frac{{\left( {N - d} \right){\beta ^2}}}{{d\left( {N - 1} \right)}}.
\end{align}
\end{lemma}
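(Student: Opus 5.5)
Conditioned on $\mathcal{F}^t$, the model $\mathbf{x}^t$ is fixed, so the only randomness comes from the task index $\mathcal{T}_i^t$ and the permutation $\mathbf{p}^t$. The plan is to show that $\mathbf{g}_i^t$ is the average of $d$ local gradients whose indices form a uniformly random $d$-subset of $\{1,\dots,N\}$ drawn without replacement. Then (\ref{h1}) is just the classical variance formula for a sample mean under sampling without replacement from a finite population.

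\textbf{Step 1 (reduction to uniform sampling).} Row $\mathcal{T}_i^t$ of the cyclic matrix $\mathbf{\hat S}$ has exactly $d$ ones, at some column set $K_i$ with $|K_i|=d$. By (\ref{encoding}),
$$\mathbf{g}_i^t=\frac1d\sum_{k\in K_i}\nabla f_{p_k^t}(\mathbf{x}^t).$$
The permutation $\mathbf{p}^t$ is uniform and independent of $\mathcal{T}_i^t$. Hence, for every fixed $K_i$, the index set $\{p_k^t : k\in K_i\}$ is a uniformly random $d$-subset $A$ of $\{1,\dots,N\}$. This conditional law does not depend on $\mathcal{T}_i^t$, so it is also the unconditional law.

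\textbf{Step 2 (variance computation).} Write $\mathbf{v}_j=\nabla f_j(\mathbf{x}^t)-\boldsymbol{\mu}^t$. By (\ref{def mu}) and (\ref{basic problem 2}), $\sum_{j=1}^N \mathbf{v}_j=\mathbf{0}$. Expanding gives
$$\mathbb{E}\|\mathbf{g}_i^t-\boldsymbol{\mu}^t\|^2=\frac1{d^2}\Big(\sum_j \Pr(j\in A)\|\mathbf{v}_j\|^2+\sum_{j\neq l}\Pr(j,l\in A)\langle \mathbf{v}_j,\mathbf{v}_l\rangle\Big),$$
with $\Pr(j\in A)=d/N$ and $\Pr(j,l\in A)=\frac{d(d-1)}{N(N-1)}$. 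Use the zero-sum identity $\sum_{j\neq l}\langle \mathbf{v}_j,\mathbf{v}_l\rangle=-\sum_j\|\mathbf{v}_j\|^2$ and collect terms:
$$\frac1{d^2}\Big(\frac dN-\frac{d(d-1)}{N(N-1)}\Big)\sum_j\|\mathbf{v}_j\|^2=\frac{N-d}{d(N-1)}\cdot\frac1N\sum_j\|\mathbf{v}_j\|^2.$$

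\textbf{Step 3.} Apply Assumption~\ref{assp bounded heter} at $\mathbf{x}=\mathbf{x}^t$ to bound $\frac1N\sum_j\|\mathbf{v}_j\|^2\le\beta^2$, which yields (\ref{h1}).

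\textbf{Main obstacle.} The delicate point is Step 1. We must check that the composition of the cyclic row with the random permutation gives exactly uniform sampling without replacement, and that conditioning on $\mathcal{F}^t$ leaves $\mathbf{p}^t$ uniform. This holds because $\mathbf{p}^t$ is generated freshly and independently at each iteration. The same calculation could alternatively be read off from Lemma~\ref{lemma1}-style computations, but the direct approach is cleaner.
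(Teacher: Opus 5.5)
Your proof is correct and is essentially the paper's argument. Both expand the square using the first- and second-order inclusion probabilities $d/N$ and $\frac{d(d-1)}{N(N-1)}$ of a uniform $d$-subset, together with $\sum_{k=1}^N \nabla f_k(\mathbf{x}^t)=N\boldsymbol{\mu}^t$, and then invoke Assumption~\ref{assp bounded heter} at $\mathbf{x}^t$. The only differences are cosmetic: you center each gradient before expanding, whereas the paper first uses unbiasedness to write the left-hand side as $\mathbb{E}(\|\mathbf{g}_i^t\|^2\mid\mathcal{F}^t)-\|\boldsymbol{\mu}^t\|^2$ and then lets the $\|\boldsymbol{\mu}^t\|^2$ terms cancel, and your Step 1 spells out the uniform-subset reduction that the paper only asserts.
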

\begin{proof}
    Please see Appendix~\ref{appen lemma20} in the supplementary materials. 
\end{proof}
\begin{lemma}
    \label{lemma2}
Let us define
\begin{align}
    \label{honest avg}
    {{\mathbf{\bar g}}^t} = \frac{1}{H}\sum\limits_{i \in {\mathcal{H}^t}} {{\mathbf{\hat{g}}}_i^t}.
\end{align}
Based on (\ref{honest avg}), we have
    \begin{align}
        \label{bound honest avg}
        \mathbb{E}\left( {\left. {{{\left\| {{{{\mathbf{\hat g}}}^t} - {{{\mathbf{\bar g}}}^t}} \right\|}^2}} \right|{\mathcal{F}^t}} \right) \leqslant \kappa {\kappa _1} + \kappa {\kappa _2}{\left\| {\nabla F\left( {{{\mathbf{x}}^t}} \right)} \right\|^2},
    \end{align}
where $\mathbb{E}\left( {\left. {\cdot} \right|{\mathcal{F}^t}} \right)$ is the expectation conditioned on previous iterations and
\begin{align}
    \label{def k1}
    {\kappa _1} \triangleq N{\beta ^2}\left[ {\left( {\frac{1}{H} + 1} \right)\frac{{4\delta }}{d}} \right] + 4{\beta ^2}\frac{{\left( {N - d} \right)N}}{{dH\left( {N - 1} \right)}},
\end{align}
\begin{align}
    \label{def k2}
    {\kappa _2} \triangleq \left[ {\left( {\frac{1}{H} + 1} \right)\frac{{4\delta }}{d} + 4\frac{{\left( {N - H} \right)\left( {N - d} \right)}}{{dH\left( {N - 1} \right)N}}} \right]\frac{1}{N}.
\end{align}
\end{lemma}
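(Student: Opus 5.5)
By $\kappa$-robustness (Definition~\ref{def:resaveraging}) applied to the honest compressed vectors $\{\hat{\mathbf{g}}_i^t\}_{i\in\mathcal{H}^t}$ and the Byzantine vectors, we have pointwise
\begin{align*}
\|\hat{\mathbf{g}}^t-\bar{\mathbf{g}}^t\|^2\le \kappa\frac{1}{H}\sum_{i\in\mathcal{H}^t}\|\hat{\mathbf{g}}_i^t-\bar{\mathbf{g}}^t\|^2 .
\end{align*}
So it suffices to show that $\mathbb{E}(\frac{1}{H}\sum_{i\in\mathcal{H}^t}\|\hat{\mathbf{g}}_i^t-\bar{\mathbf{g}}^t\|^2\mid\mathcal{F}^t)\le\kappa_1+\kappa_2\|\nabla F(\mathbf{x}^t)\|^2$. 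Since the empirical variance is at most the mean squared distance to any fixed point, I will replace $\bar{\mathbf{g}}^t$ by a convenient reference and split with $(a+b+c+e)^2\le 4(\cdot)$. This explains the factors $4$ in $\kappa_1,\kappa_2$.

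\textbf{Decomposition.} Write
\begin{align*}
\hat{\mathbf{g}}_i^t-\bar{\mathbf{g}}^t
=(\hat{\mathbf{g}}_i^t-\mathbf{g}_i^t)+(\mathbf{g}_i^t-\boldsymbol{\mu}^t)+(\boldsymbol{\mu}^t-\tilde{\mathbf{g}}^t)+(\tilde{\mathbf{g}}^t-\bar{\mathbf{g}}^t),
\end{align*}
where $\tilde{\mathbf{g}}^t=\frac1H\sum_{i\in\mathcal{H}^t}\mathbf{g}_i^t$ is the uncompressed honest average. The four terms are handled as follows.

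\begin{enumerate}
\item \emph{Compression error.} By (\ref{def unbiased bound}), its expected square is at most $\delta\mathbb{E}\|\mathbf{g}_i^t\|^2$.
\item \emph{Coding error.} Its expected square is bounded directly by Lemma~\ref{lemma20}.
\item \emph{Honest-average deviation.} Writing $\tilde{\mathbf{g}}^t-\boldsymbol{\mu}^t$ as $(\frac{1}{dH}\mathbf{h}\hat{\mathbf{S}}-\frac1N\mathbf{1})$ applied to the permuted matrix of $\nabla f_{p_k^t}(\mathbf{x}^t)-\boldsymbol{\mu}^t$ (the row-vector part sums to zero, so $\boldsymbol{\mu}^t$ can be subtracted freely), Cauchy--Schwarz and Lemma~\ref{lemma1} give a bound of order $\frac{(N-H)(N-d)}{dH(N-1)N}\cdot N\beta^2$. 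The random task assignment makes $\mathbf{h}$ uniform, and $\mathbf{p}^t$ makes the gradient ordering independent of it.
\item \emph{Compression noise of the average.} Conditioned on the $\mathbf{g}_i^t$, the honest compressions are independent and unbiased, so the expected square is at most $\frac{\delta}{H^2}\sum_i\|\mathbf{g}_i^t\|^2$. This term is the source of the $\frac1H$ inside $(\frac1H+1)$.
\end{enumerate}

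\textbf{Converting to $\beta^2$ and $\|\nabla F\|^2$.} For $\mathbb{E}\|\mathbf{g}_i^t\|^2$, I use $\|\mathbf{g}_i^t\|^2\le 2\|\mathbf{g}_i^t-\boldsymbol{\mu}^t\|^2+2\|\boldsymbol{\mu}^t\|^2$, or better a bias--variance identity, together with $\|\boldsymbol{\mu}^t\|^2=\|\nabla F\|^2/N^2$ and Jensen over the $d$ averaged gradients (giving $\beta^2 N/d$-type terms). Collecting terms should reproduce $\kappa_1$ and $\kappa_2$ up to the stated constants. Any slack, such as $(N-d)/(N-1)\le1$ or $1/N^2\le1/N$, goes in the direction of the looser stated constants.

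\textbf{Main obstacle.} Term 3 is the hardest. Lemma~\ref{lemma1} controls the norm of the coefficient row vector, but turning that into a bound on the vector norm requires care. Cauchy--Schwarz with the Frobenius norm of the deviation matrix yields $\|\cdot\|^2\cdot\sum_k\|\nabla f_k-\boldsymbol{\mu}^t\|^2\le\|\cdot\|^2 N\beta^2$, which gives the $N$-scaled $\beta^2$ term. The heterogeneity cross term there also contributes the $\|\nabla F\|^2$ piece of $\kappa_2$. I expect the exact matching of $\kappa_2$'s $\frac1N$ factor to need careful bookkeeping of whether Assumption~\ref{assp bounded heter} is centered at $\boldsymbol{\mu}^t$.
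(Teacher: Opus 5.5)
Your proof is correct and is essentially the paper's. The paper splits $\hat{\mathbf{g}}_i^t-\bar{\mathbf{g}}^t$ around $\boldsymbol{\mu}^t$, and then splits $\bar{\mathbf{g}}^t-\boldsymbol{\mu}^t$ into the averaged compression noise and $\frac{1}{H}\sum_{i\in\mathcal{H}^t}\mathbf{g}_i^t-\boldsymbol{\mu}^t$; this yields exactly your four terms with factor $4$ each, bounded by the same tools (unbiasedness and independence of the compressions, Lemma~\ref{lemma20}, and Lemma~\ref{lemma1} with $\|\mathbf{v}\mathbf{G}^t\|^2\le\|\mathbf{v}\|^2\|\mathbf{G}^t\|_F^2$), with the same constants.

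Your centering in term 3 is valid, since the coefficient row sums to zero, and it only tightens the bound. In the paper, that term's $\|\nabla F(\mathbf{x}^t)\|^2$ contribution to $\kappa_2$ comes from not centering, via $\|\mathbf{G}^t\|_F^2\le N\beta^2+\frac{1}{N}\|\nabla F(\mathbf{x}^t)\|^2$, not from a cross term (which vanishes). One caution: bound $\mathbb{E}\|\mathbf{g}_i^t\|^2$ with the bias--variance identity (\ref{lemma20-5}), because the crude factor-$2$ split overshoots the $\delta$-part of $\kappa_2$ when $d>N/2$.
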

\begin{proof}
    Please see Appendix~\ref{appen lemma2} in the supplementary materials, where Lemma~\ref{lemma1} and Lemma~\ref{lemma20} are applied. 
\end{proof}
\begin{corollary}
\label{coro}
By setting the computation task matrix as $\mathbf{\hat{S}}$, the infimum of the upper bound of $\mathbb{E}\left( {\left. {{{\left\| {{{{\mathbf{\hat g}}}^t} - {{{\mathbf{\bar g}}}^t}} \right\|}^2}} \right|{\mathcal{F}^t}} \right)$ can be attained. In that case, the deviation between the global model update and the average of the messages from the honest devices is minimized, which contributes to better robustness to Byzantine attacks. 
\end{corollary}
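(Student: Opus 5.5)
The plan is to redo the derivation of Lemma~\ref{lemma2} with a general computation task matrix $\mathbf{S}\in\mathcal{S}$ in place of $\mathbf{\hat S}$, tracking where the task matrix enters. Then I would show that it enters the final bound only through the quantity bounded in Lemma~\ref{lemma1}, and that this dependence is monotone increasing.

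First, for a general $\mathbf{S}\in\mathcal{S}$, with the random task assignment and the random permutation $\mathbf{p}^t$ as in LAD/Com-LAD, I would apply $\kappa$-robustness (Definition~\ref{def:resaveraging}) conditioned on the compression randomness. This gives
\begin{align*}
\mathbb{E}\left(\left.\|\mathbf{\hat g}^t-\mathbf{\bar g}^t\|^2\right|\mathcal{F}^t\right)\le \kappa\,\mathbb{E}\left(\left.\frac{1}{H}\sum_{i\in\mathcal{H}^t}\|\mathbf{\hat g}_i^t-\mathbf{\bar g}^t\|^2\right|\mathcal{F}^t\right).
\end{align*}
I would split $\mathbf{\hat g}_i^t-\mathbf{\bar g}^t$ into three parts: the compression errors, $\mathbf{g}_i^t-\boldsymbol{\mu}^t$, and $\boldsymbol{\mu}^t-\frac{1}{H}\sum_{j\in\mathcal{H}^t}\mathbf{g}_j^t$. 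I would bound these with $\|a+b+c\|^2$-type inequalities, as in Appendix~\ref{appen lemma2}.

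Second, I would handle each part. The compression terms are controlled by (\ref{def unbiased bound}) together with $\|\mathbf{g}_i^t\|^2\le 2\|\mathbf{g}_i^t-\boldsymbol{\mu}^t\|^2+2\|\boldsymbol{\mu}^t\|^2$. The term $\mathbf{g}_i^t-\boldsymbol{\mu}^t$ is an average of $d$ local gradients drawn without replacement under the random permutation, so Lemma~\ref{lemma20} applies for every $\mathbf{S}\in\mathcal{S}$, because each row still has exactly $d$ ones. Neither of these depends on the particular structure of $\mathbf{S}$.

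The only $\mathbf{S}$-dependent term is the honest-average deviation. Writing the honest rows through the random selector $\mathbf{h}^{1\times N}$, it equals
\begin{align*}
\left(\frac{1}{d}\frac{1}{H}\mathbf{h}^{1\times N}\mathbf{S}-\frac{1}{N}\mathbf{1}^{1\times N}\right)\mathbf{G}^t,
\end{align*}
where $\mathbf{G}^t$ stacks $\nabla f_{p_k^t}(\mathbf{x}^t)-\boldsymbol{\mu}^t$. The rows of $\mathbf{G}^t$ sum to zero, so the subtraction of $\frac{1}{N}\mathbf{1}$ is harmless. Bounding through Assumption~\ref{assp bounded heter} gives $N\beta^2$ times the squared norm in Lemma~\ref{lemma1}, plus a $\|\nabla F(\mathbf{x}^t)\|^2/N$ contribution carrying the same factor.

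Third, I would conclude as follows. The resulting upper bound has the form $\kappa(A+B\,\Phi(\mathbf{S}))+\kappa(A'+B'\,\Phi(\mathbf{S}))\|\nabla F(\mathbf{x}^t)\|^2$, where $\Phi(\mathbf{S})=\mathbb{E}\|\frac{1}{dH}\mathbf{h}\mathbf{S}-\frac{1}{N}\mathbf{1}\|^2$ and $A,A',B,B'\ge0$ do not depend on $\mathbf{S}$. This bound is nondecreasing in $\Phi(\mathbf{S})$, so its infimum over $\mathcal{S}$ is attained exactly where $\Phi$ is minimized. By Lemma~\ref{lemma1}, the cyclic matrix $\mathbf{\hat S}$ achieves $\inf_{\mathbf{S}\in\mathcal{S}}\Phi(\mathbf{S})$. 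Substituting its value recovers $\kappa_1,\kappa_2$ in (\ref{def k1})--(\ref{def k2}), and this is the claimed infimum. The interpretive statement about robustness then follows from Lemma~\ref{lemma2}.

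The main obstacle is the second step: confirming that in the general-$\mathbf{S}$ derivation no other term secretly depends on $\mathbf{S}$ beyond $\Phi$. In particular, the cross terms must be split so that $\mathbf{S}$ appears only through $\Phi$, with nonnegative coefficients. I would first check that the proof of Lemma~\ref{lemma2} uses only the row-sum property of $\mathbf{\hat S}$ everywhere except where Lemma~\ref{lemma1} is invoked.
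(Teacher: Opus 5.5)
Your proposal is correct and is essentially the paper's argument (Appendix~\ref{appen coro1}), written out more explicitly. For a general $\mathbf S\in\mathcal S$, the only $\mathbf S$-dependent piece of the bound in Lemma~\ref{lemma2} is the honest-average term controlled by Lemma~\ref{lemma1}, entering with nonnegative coefficients; the other terms depend on $\mathbf S$ only through $d$, via Lemma~\ref{lemma20} or $\|\mathbf S\|_F^2=Nd$ in (\ref{1st term 4}), so $\mathbf{\hat S}$ minimizes the bound. One minor caveat: to literally recover $\kappa_1,\kappa_2$, bound the compression term as in (\ref{1st term 4}) rather than via $\|\mathbf g_i^t\|^2\le 2\|\mathbf g_i^t-\boldsymbol\mu^t\|^2+2\|\boldsymbol\mu^t\|^2$, which gives different, not uniformly smaller, constants (though the monotone-in-$\Phi(\mathbf S)$ structure and hence the conclusion are unaffected).
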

\begin{proof}
Please see Appendix~\ref{appen coro1} in the supplementary materials. 
\end{proof}
\begin{lemma}
    \label{lemma4}
For the average of the messages sent by the honest devices, we can derive 
\begin{align}
    \label{g bar}
    \mathbb{E}\left( {\left. {{{\left\| {{{{\mathbf{\bar g}}}^t}} \right\|}^2}} \right|{\mathcal{F}^t}} \right) \leqslant {\kappa _3} + {\kappa _4}{\left\| {\nabla F\left( {{{\mathbf{x}}^t}} \right)} \right\|^2},
\end{align}
where
\begin{align}
    \label{def k3}
    {\kappa _3} \triangleq \left[ {\frac{{4\delta }}{{Hd}} + 4\frac{{\left( {N - H} \right)\left( {N - d} \right)}}{{dH\left( {N - 1} \right)N}}} \right]N{\beta ^2},
\end{align}
\begin{align}
    \label{def k4}
    {\kappa _4} \triangleq 2\frac{1}{{{N^2}}} + \frac{{4\delta }}{{HdN}} + 4\frac{{\left( {N - H} \right)\left( {N - d} \right)}}{{dH\left( {N - 1} \right){N^2}}}.
\end{align}
\end{lemma}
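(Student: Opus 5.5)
We bound $\mathbb{E}(\|\bar{\mathbf g}^t\|^2\mid\mathcal F^t)$ by splitting $\bar{\mathbf g}^t$ into three pieces: the true normalized gradient $\boldsymbol\mu^t$, the coding deviation, and the compression noise. Define the uncompressed honest average $\tilde{\mathbf g}^t=\frac1H\sum_{i\in\mathcal H^t}\mathbf g_i^t$. Then
\begin{align*}
\bar{\mathbf g}^t=\left(\bar{\mathbf g}^t-\tilde{\mathbf g}^t\right)+\left(\tilde{\mathbf g}^t-\boldsymbol\mu^t\right)+\boldsymbol\mu^t .
\end{align*}
The compression noise is conditionally zero-mean given the coded vectors, by (\ref{def unbiased}). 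So the cross term between $\bar{\mathbf g}^t-\tilde{\mathbf g}^t$ and the other two pieces vanishes, and
\begin{align*}
\mathbb{E}\|\bar{\mathbf g}^t\|^2=\mathbb{E}\|\bar{\mathbf g}^t-\tilde{\mathbf g}^t\|^2+\mathbb{E}\|\tilde{\mathbf g}^t-\boldsymbol\mu^t+\boldsymbol\mu^t\|^2 .
\end{align*}
We bound the second term by $2\mathbb{E}\|\tilde{\mathbf g}^t-\boldsymbol\mu^t\|^2+2\|\boldsymbol\mu^t\|^2$. Since $\|\boldsymbol\mu^t\|^2=\frac1{N^2}\|\nabla F(\mathbf x^t)\|^2$, this produces the $\frac{2}{N^2}$ term in $\kappa_4$.

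\textbf{Compression term.} The compressions are independent across devices, so
\begin{align*}
\mathbb{E}\|\bar{\mathbf g}^t-\tilde{\mathbf g}^t\|^2=\frac1{H^2}\sum_{i\in\mathcal H^t}\mathbb{E}\|\mathcal C(\mathbf g_i^t)-\mathbf g_i^t\|^2\le\frac{\delta}{H^2}\sum_{i\in\mathcal H^t}\mathbb{E}\|\mathbf g_i^t\|^2 .
\end{align*}
If independence is not to be assumed, Jensen's inequality gives the same bound with a factor of $H$, which would change the constants; we expect the paper's constants to correspond to independence. Next, $\|\mathbf g_i^t\|^2\le2\|\mathbf g_i^t-\boldsymbol\mu^t\|^2+2\|\boldsymbol\mu^t\|^2$. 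Applying Lemma~\ref{lemma20} and $(N-d)/(N-1)\le1$ gives $\mathbb{E}\|\mathbf g_i^t\|^2\le\frac{2\beta^2}{d}+\frac{2}{N^2}\|\nabla F\|^2$. The target constants $\frac{4\delta N\beta^2}{Hd}$ and $\frac{4\delta}{HdN}$ suggest a looser but valid route instead. Write $\mathbf g_i^t=\frac1d\sum_k\nabla f_{p_k^t}$ and use $\|\frac1d\sum_k a_k\|^2\le\frac1d\sum_k\|a_k\|^2\le\frac1d\sum_{j=1}^N\|\nabla f_j\|^2$. Then Assumption~\ref{assp bounded heter} gives
\begin{align*}
\sum_j\|\nabla f_j\|^2\le2N\beta^2+2N\|\boldsymbol\mu^t\|^2 .
\end{align*}
A further factor of 2 from the initial splitting yields exactly $\frac{4\delta}{Hd}(N\beta^2+\frac1N\|\nabla F\|^2)$. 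We will adopt whichever chain reproduces the stated constants, since both are valid upper bounds.

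\textbf{Coding term.} We express $\tilde{\mathbf g}^t-\boldsymbol\mu^t$ through the random matrix structure of Lemma~\ref{lemma1}. Because the task indices are a uniformly random permutation, the honest devices pick a uniformly random set of $H$ rows of $\hat{\mathbf S}$. Hence
\begin{align*}
\tilde{\mathbf g}^t=\frac1{dH}\mathbf h\hat{\mathbf S}\,\mathbf G^t,
\end{align*}
where $\mathbf G^t$ stacks $\nabla f_{p_k^t}(\mathbf x^t)$ as rows. Since $\mathbf h\hat{\mathbf S}\mathbf 1=Hd$, the row vector $\mathbf c=\frac1{dH}\mathbf h\hat{\mathbf S}-\frac1N\mathbf 1$ has zero sum. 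Therefore
\begin{align*}
\tilde{\mathbf g}^t-\boldsymbol\mu^t=\mathbf c\,(\mathbf G^t-\mathbf 1\boldsymbol\mu^{t\top}),
\end{align*}
and Cauchy--Schwarz gives
\begin{align*}
\|\tilde{\mathbf g}^t-\boldsymbol\mu^t\|^2\le\|\mathbf c\|^2\sum_j\|\nabla f_j-\boldsymbol\mu^t\|^2\le\|\mathbf c\|^2N\beta^2 .
\end{align*}
Taking expectation over $\mathbf h$, which is independent of the gradients, and applying Lemma~\ref{lemma1} gives $\frac{(N-H)(N-d)}{dH(N-1)N}N\beta^2$. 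The factor 2 from the splitting already gives the factor 2 in the heterogeneity terms. To obtain the stated factor 4, with a matching contribution to $\kappa_4$, we can instead bound $\|\mathbf c\mathbf G^t\|^2$ with $\|\mathbf G^t\|_F^2\le2N\beta^2+\frac2N\|\nabla F\|^2$ rather than centering. This yields precisely the term $\frac{4(N-H)(N-d)}{dH(N-1)N^2}\|\nabla F\|^2$ appearing in $\kappa_4$.

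\textbf{Main obstacle.} The main obstacle is rigorously justifying that $\mathbf h$ is uniformly distributed and independent of $\mathbf G^t$ given $\mathcal F^t$, so that Lemma~\ref{lemma1} applies. The concern is that the Byzantine set $\mathcal B^t$ might be chosen adversarially. We will argue that $\mathcal H^t$ is fixed given $\mathcal F^t$, and that the uniformly random permutation $\mathcal T^t$, independent of $\mathbf p^t$, makes the set of honest task rows uniform. Summing the three contributions then gives $\kappa_3+\kappa_4\|\nabla F(\mathbf x^t)\|^2$.
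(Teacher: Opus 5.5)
Your proof is correct and follows essentially the paper's route: you split off the compression noise using unbiasedness and independence across devices (which the paper also uses implicitly when it drops cross terms in the first-term bound). You then write the honest coded average as $\frac{1}{dH}\mathbf h\hat{\mathbf S}\mathbf G^t$, invoke Lemma~\ref{lemma1}, and keep $2\|\boldsymbol{\mu}^t\|^2=\frac{2}{N^2}\|\nabla F(\mathbf x^t)\|^2$. In the paper, the factor $4$ comes from applying $\|\mathbf a+\mathbf b\|^2\le 2\|\mathbf a\|^2+2\|\mathbf b\|^2$ twice, together with the uncentered bound $\|\mathbf G^t\|_F^2\le N\beta^2+\frac1N\|\nabla F(\mathbf x^t)\|^2$ and $\|\hat{\mathbf S}\|_F^2=Nd$. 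Your orthogonal split, the centering via $\mathbf c\mathbf 1=0$, and the Lemma~\ref{lemma20}-based bound on $\mathbb E\|\mathbf g_i^t\|^2$ all give termwise smaller constants (using $d\le N$), so you need not force an exact match. Note also that your orthogonal decomposition puts no ``further factor of 2'' on the compression term; this is harmless, since a smaller bound still implies the lemma.
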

\begin{proof}
    Please see Appendix~\ref{appen lemma4} in the supplementary materials. 
\end{proof}
Next, based on the above lemmas, we first present the main theorem that characterizes the convergence performance of Com-LAD. Based on this, the convergence performance of LAD can be derived accordingly as a special case of Com-LAD. 
\begin{theorem}[Convergence performance of Com-LAD]
\label{convergence performance}
If we fix the learning rate as ${\gamma ^t} = {\gamma ^0}  < \frac{{\frac{1}{N} - \sqrt {\kappa {\kappa _2}} }}{{L\kappa {\kappa _2} + L{\kappa _4}}}$, under the condition $\sqrt {\kappa {\kappa _2}}  < \frac{1}{N}$, Com-LAD converges as follows: 
\begin{align}
    \label{converg cola}
& \frac{1}{T}\sum\limits_{t = 0}^{T - 1} {\mathbb{E}\left[ {{{\left\| {\nabla F\left( {{{\mathbf{x}}^t}} \right)} \right\|}^2}} \right]}\nonumber\\
\leqslant & \frac{1}{T}\frac{{F\left( {{{\mathbf{x}}^0}} \right) - {F^*}}}{{{\gamma ^0}\left( {\frac{1}{N} - \sqrt {\kappa {\kappa _2}} } \right) - {{\left( {{\gamma ^0}} \right)}^2}\left( {L\kappa {\kappa _2} + L{\kappa _4}} \right)}} \nonumber\\
&+ \frac{{\frac{{{\kappa _1}\sqrt \kappa  }}{{2\sqrt {{\kappa _2}} }} + {\gamma ^0}\left( {L\kappa {\kappa _1} + L{\kappa _3}} \right)}}{{\left( {\frac{1}{N} - \sqrt {\kappa {\kappa _2}} } \right) - {\gamma ^0}\left( {L\kappa {\kappa _2} + L{\kappa _4}} \right)}}.
\end{align}
\end{theorem}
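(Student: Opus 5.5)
The plan is the standard descent argument for smooth nonconvex objectives, applied to the update (\ref{update global model}). I will split the aggregated vector $\mathbf{\hat g}^t$ into the honest average $\mathbf{\bar g}^t$ plus the aggregation error $\mathbf{\hat g}^t-\mathbf{\bar g}^t$, and control each piece with Lemma~\ref{lemma2} and Lemma~\ref{lemma4}.

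\textbf{Descent step.} Apply Assumption~\ref{smooth assumption} with $\mathbf{x}=\mathbf{x}^{t+1}$ and $\mathbf{y}=\mathbf{x}^t$:
\begin{align*}
F(\mathbf{x}^{t+1})\leqslant F(\mathbf{x}^t)-\gamma^0\langle\nabla F(\mathbf{x}^t),\mathbf{\hat g}^t\rangle+\frac{L(\gamma^0)^2}{2}\|\mathbf{\hat g}^t\|^2 .
\end{align*}
Condition on $\mathcal{F}^t$ and write $\mathbf{\hat g}^t=\mathbf{\bar g}^t+(\mathbf{\hat g}^t-\mathbf{\bar g}^t)$.

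\textbf{Honest part of the inner product.} Compute $\mathbb{E}(\mathbf{\bar g}^t|\mathcal{F}^t)$. The compression is unbiased, and the task indices and the permutation $\mathbf{p}^t$ are uniformly random. Hence each honest $\mathbf{g}_i^t$ has conditional mean $\frac{1}{N}\sum_k\nabla f_k(\mathbf{x}^t)=\boldsymbol{\mu}^t$. It follows that $\mathbb{E}\langle\nabla F,\mathbf{\bar g}^t\rangle=\frac{1}{N}\|\nabla F(\mathbf{x}^t)\|^2$. This is why the factor $\frac{1}{N}$ appears.

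\textbf{Error part of the inner product.} Use Cauchy--Schwarz and Jensen:
\begin{align*}
|\mathbb{E}\langle\nabla F,\mathbf{\hat g}^t-\mathbf{\bar g}^t\rangle|\leqslant\|\nabla F\|\sqrt{\kappa\kappa_1+\kappa\kappa_2\|\nabla F\|^2}.
\end{align*}
I will then use $\sqrt{a+b}\leqslant\sqrt a+\sqrt b$, which gives $\sqrt{\kappa\kappa_2}\|\nabla F\|^2+\sqrt{\kappa\kappa_1}\|\nabla F\|$. The remaining term is handled with Young's inequality weighted as $\sqrt{\kappa\kappa_1}\|\nabla F\|\leqslant\frac{\sqrt{\kappa}\kappa_1}{2\sqrt{\kappa_2}}+\frac{\sqrt{\kappa\kappa_2}}{2}\|\nabla F\|^2$.

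\textbf{Quadratic term.} Bound $\|\mathbf{\hat g}^t\|^2\leqslant2\|\mathbf{\hat g}^t-\mathbf{\bar g}^t\|^2+2\|\mathbf{\bar g}^t\|^2$. Lemma~\ref{lemma2} and Lemma~\ref{lemma4} then give $\frac{L(\gamma^0)^2}{2}\mathbb{E}\|\mathbf{\hat g}^t\|^2\leqslant L(\gamma^0)^2[\kappa\kappa_1+\kappa_3+(\kappa\kappa_2+\kappa_4)\|\nabla F\|^2]$. This matches the terms $L\kappa\kappa_1+L\kappa_3$ and $L\kappa\kappa_2+L\kappa_4$ in (\ref{converg cola}).

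\textbf{Collecting and telescoping.} Combining, I aim for
\begin{align*}
\mathbb{E}F(\mathbf{x}^{t+1})\leqslant\mathbb{E}F(\mathbf{x}^t)-\Big[\gamma^0\big(\tfrac{1}{N}-\sqrt{\kappa\kappa_2}\big)-(\gamma^0)^2L(\kappa\kappa_2+\kappa_4)\Big]\mathbb{E}\|\nabla F\|^2+\gamma^0\tfrac{\kappa_1\sqrt\kappa}{2\sqrt{\kappa_2}}+(\gamma^0)^2L(\kappa\kappa_1+\kappa_3).
\end{align*}
The condition $\sqrt{\kappa\kappa_2}<\frac1N$ together with the stepsize bound makes the bracket positive. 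Summing over $t=0,\dots,T-1$, using Assumption~\ref{lower bound} to lower-bound $F(\mathbf{x}^T)\geqslant F^*$, and dividing by $T$ times the bracket yields (\ref{converg cola}).

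\textbf{Main obstacle.} The delicate step is the error inner product. To land exactly on the constants $\sqrt{\kappa\kappa_2}$ in the coefficient and $\frac{\kappa_1\sqrt\kappa}{2\sqrt{\kappa_2}}$ in the residual, the Young weighting must be chosen carefully. The split of the square root may need adjusting, since a naive choice gives $\frac{3}{2}\sqrt{\kappa\kappa_2}$ rather than $\sqrt{\kappa\kappa_2}$. A tighter route that achieves exactly $\sqrt{\kappa\kappa_2}$ is to bound
\begin{align*}
\|\nabla F\|\sqrt{\kappa\kappa_1+\kappa\kappa_2\|\nabla F\|^2}\leqslant\sqrt{\kappa\kappa_2}\|\nabla F\|^2+\frac{\kappa\kappa_1}{2\sqrt{\kappa\kappa_2}},
\end{align*}
which follows from $\sqrt{u^2+v}\leqslant u+\frac{v}{2u}$. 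This is the choice I will use.
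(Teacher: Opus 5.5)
Your proposal is correct and matches the paper's proof in structure: the same descent inequality, the same split $\mathbf{\hat g}^t=\mathbf{\bar g}^t+(\mathbf{\hat g}^t-\mathbf{\bar g}^t)$ with $\|\mathbf{\hat g}^t\|^2\leqslant 2\|\mathbf{\hat g}^t-\mathbf{\bar g}^t\|^2+2\|\mathbf{\bar g}^t\|^2$, the same use of Lemma~\ref{lemma2} and Lemma~\ref{lemma4}, and telescoping with Assumption~\ref{lower bound}. The only difference is the cross term: the paper applies Young's inequality with $\eta=\sqrt{\kappa\kappa_2}$, while you use Cauchy--Schwarz, Jensen and $\sqrt{u^2+v}\leqslant u+\frac{v}{2u}$, and both give exactly $\sqrt{\kappa\kappa_2}\|\nabla F(\mathbf{x}^t)\|^2+\frac{\kappa_1\sqrt{\kappa}}{2\sqrt{\kappa_2}}$ (your bound holds trivially when $\nabla F(\mathbf{x}^t)=\mathbf{0}$).
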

\begin{proof}
    Please see Appendix~\ref{appen proof the} in the supplementary materials. 
\end{proof}

\begin{theorem}[Convergence performance of LAD]
\label{convergence performance lad}
If we fix the learning rate as ${\gamma ^t} = {\gamma ^0}  < \frac{{\frac{1}{N} - \sqrt {\kappa {\xi _2}} }}{{L\kappa {\xi _2} + L{\xi _4}}}$, under the condition $\sqrt {\kappa {\xi _2}}  < \frac{1}{N}$, LAD converges as follows: 
\begin{align}
    \label{converg cola lad}
&\frac{1}{T}\sum\limits_{t = 0}^{T - 1} {\mathbb{E}\left[ {{{\left\| {\nabla F\left( {{{\mathbf{x}}^t}} \right)} \right\|}^2}} \right]} \nonumber\\ \leqslant& \frac{1}{T}\frac{{F\left( {{{\mathbf{x}}^0}} \right) - {F^*}}}{{{\gamma ^0}\left( {\frac{1}{N} - \sqrt {\kappa {\xi _2}} } \right) - {{\left( {{\gamma ^0}} \right)}^2}\left( {L\kappa {\xi _2} + L{\xi _4}} \right)}}\nonumber\\
&+ \frac{{\frac{{{\xi _1}\sqrt \kappa  }}{{2\sqrt {{\xi _2}} }} + {\gamma ^0}\left( {L\kappa {\xi _1} + L{\xi _3}} \right)}}{{\left( {\frac{1}{N} - \sqrt {\kappa {\xi _2}} } \right) - {\gamma ^0}\left( {L\kappa {\xi _2} + L{\xi _4}} \right)}},
\end{align}
where
\begin{align}
    \label{def xi1}
    {\xi _1} \triangleq 4{\beta ^2}\frac{{\left( {N - d} \right)N}}{{dH\left( {N - 1} \right)}},
\end{align}
\begin{align}
    \label{def xi2}
   {\xi _2} \triangleq \left[ {4\frac{{\left( {N - H} \right)\left( {N - d} \right)}}{{dH\left( {N - 1} \right)N}}} \right]\frac{1}{N},
\end{align}
\begin{align}
    \label{def xi3}
    {\xi _3} \triangleq 8\frac{{\left( {N - H} \right)\left( {N - d} \right)}}{{dH\left( {N - 1} \right)}}{\beta ^2},
\end{align}
\begin{align}
    \label{def xi4}
    {\xi _4} \triangleq 2\frac{1}{{{N^2}}}  + 8\frac{{\left( {N - H} \right)\left( {N - d} \right)}}{{dH\left( {N - 1} \right){N^2}}}.
\end{align}

\end{theorem}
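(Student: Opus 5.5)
Theorem~\ref{convergence performance lad} follows from Theorem~\ref{convergence performance} by specializing to $\delta=0$. With no compression, $\mathcal{C}$ is the identity, so $\mathbf{\hat g}_i^t=\mathbf{g}_i^t$, $\mathbf{\hat g}^t=\mathbf{g}^t$, and Com-LAD coincides with LAD iterate by iterate. The identity map satisfies Definition~\ref{def_unbiased_com} with $\delta=0$. So Theorem~\ref{convergence performance} applies verbatim with $\kappa_1,\dots,\kappa_4$ evaluated at $\delta=0$. It remains to check that these are dominated by $\xi_1,\dots,\xi_4$, and that the bound is monotone in these constants in the right direction.

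\textbf{Step 1 (constants).} Setting $\delta=0$ in (\ref{def k1}) and (\ref{def k2}) gives exactly $\kappa_1=\xi_1$ and $\kappa_2=\xi_2$. From (\ref{def k3}) and (\ref{def k4}),
\begin{align*}
\kappa_3\big|_{\delta=0}&=4\frac{(N-H)(N-d)}{dH(N-1)}\beta^2\le \xi_3,\\
\kappa_4\big|_{\delta=0}&=\frac{2}{N^2}+4\frac{(N-H)(N-d)}{dH(N-1)N^2}\le \xi_4.
\end{align*}
The paper's $\xi_3,\xi_4$ carry a factor $8$ instead of $4$. This probably comes from redoing Lemma~\ref{lemma4} directly for LAD, or from a looser split. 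Either way the larger constants are valid upper bounds, so I will simply use the inequalities $\kappa_3\le\xi_3$ and $\kappa_4\le\xi_4$.

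\textbf{Step 2 (monotonicity).} The right-hand side of (\ref{converg cola}) is nondecreasing in $\kappa_3$ and $\kappa_4$. Both appear only in numerators with positive coefficients, or in denominators with a minus sign. The denominator $(\frac1N-\sqrt{\kappa\kappa_2})-\gamma^0(L\kappa\kappa_2+L\kappa_4)$ stays positive when $\kappa_4$ is replaced by $\xi_4$, because of the step-size condition $\gamma^0<\frac{\frac1N-\sqrt{\kappa\xi_2}}{L\kappa\xi_2+L\xi_4}$. This condition is stronger than the one needed in Theorem~\ref{convergence performance} with $\kappa_4|_{\delta=0}\le\xi_4$, so the hypotheses of that theorem hold. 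The condition $\sqrt{\kappa\kappa_2}<1/N$ is identical to $\sqrt{\kappa\xi_2}<1/N$. Substituting the larger constants therefore yields (\ref{converg cola lad}).

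\textbf{Main obstacle.} The only delicate point is sign control when enlarging $\kappa_4$ inside the denominator. This is handled because the stated step-size bound already uses $\xi_4$. If one prefers a self-contained argument, the fallback is to rerun the descent proof of Theorem~\ref{convergence performance} directly, as follows.
\begin{itemize}
\item Start from $L$-smoothness (\ref{smooth assum}).
\item Split $\mathbf{g}^t=\mathbf{\bar g}^t+(\mathbf{g}^t-\mathbf{\bar g}^t)$.
\item Use $\mathbb{E}[\mathbf{\bar g}^t|\mathcal{F}^t]=\frac1N\nabla F(\mathbf{x}^t)$, which follows from the uniform random permutations.
\item Bound the cross term by Young's inequality with parameter $\sqrt{\kappa\kappa_2}$, which produces the $\frac{\kappa_1\sqrt\kappa}{2\sqrt{\kappa_2}}$ term.
\item Invoke Lemmas~\ref{lemma2} and~\ref{lemma4} with $\delta=0$, then telescope using Assumption~\ref{lower bound}.
\end{itemize}
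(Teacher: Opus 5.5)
Your proposal is correct and follows the paper's own route: specialize Theorem~\ref{convergence performance} to $\delta=0$, where the identity map is the compressor and Com-LAD reduces to LAD. Your monotonicity and step-size domination argument is a real addition rather than padding, because at $\delta=0$ one gets $\kappa_1=\xi_1$ and $\kappa_2=\xi_2$ but only $\kappa_3\le\xi_3$ and $\kappa_4\le\xi_4$ (coefficient $4$ versus $8$), a mismatch that the paper's one-line substitution argument passes over.
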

\begin{proof}
    Note that when $\delta=0$, there is no communication compression. Based on this fact, Theorem~\ref{convergence performance lad} can be derived by substituting $\delta=0$ into Theorem~\ref{convergence performance}. 
\end{proof}
In (\ref{converg cola}) in Theorem~\ref{convergence performance}, the first term on the right-hand side diminishes as the number of iterations increases, while the second term represents the error term, given as
\begin{align}
    \label{error term}
   \varepsilon_\text{Com-LAD}  = \frac{{\frac{{{\kappa _1}\sqrt \kappa  }}{{2\sqrt {{\kappa _2}} }} + {\gamma ^0}\left( {L\kappa {\kappa _1} + L{\kappa _3}} \right)}}{{\left( {\frac{1}{N} - \sqrt {\kappa {\kappa _2}} } \right) - {\gamma ^0}\left( {L\kappa {\kappa _2} + L{\kappa _4}} \right)}},
\end{align}
where decreasing the value of $\kappa$ indicates better learning performance. Based on Definition~\ref{def:resaveraging}, a decreasing value of \(\kappa\) can be achieved by utilizing more advanced robust aggregation rules to enhance Byzantine robustness. 
By setting $d = O\left( N \right)$, when the number of devices is very large and a certain fraction are honest, the error term can be rewritten as
\begin{align}
    \label{error term com re}
   \varepsilon_\text{Com-LAD} = O\left( {\frac{{{\kappa _1}\sqrt \kappa  }}{{\sqrt {{\kappa _2}} }}}  \right).
\end{align}
\begin{figure}
    \centering
    \includegraphics[width=\linewidth]{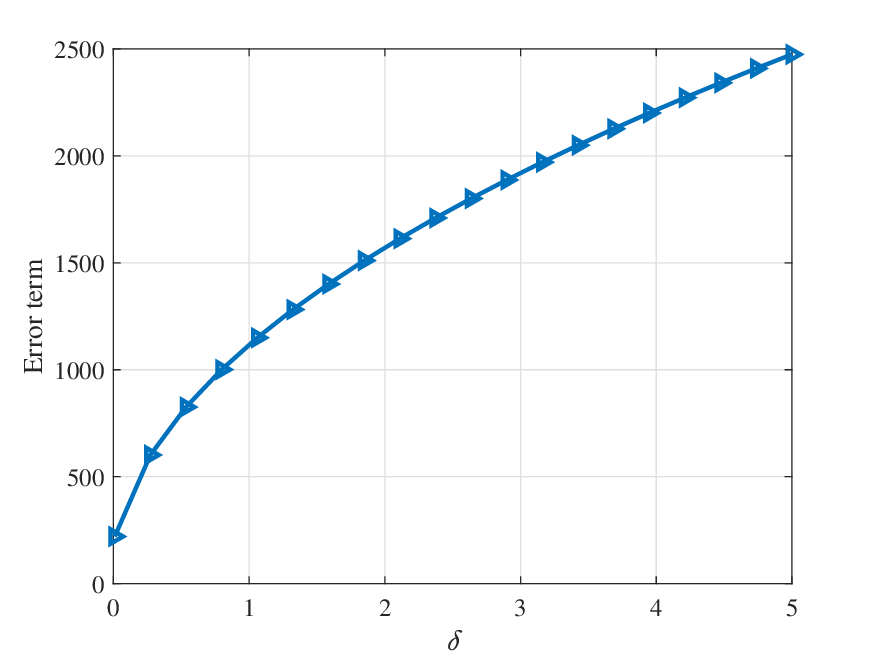}
    \caption{The error term as a function of $\delta$.}
    \label{fig: com_error_term delta}
\end{figure}
\begin{figure}
    \centering
    \includegraphics[width=\linewidth]{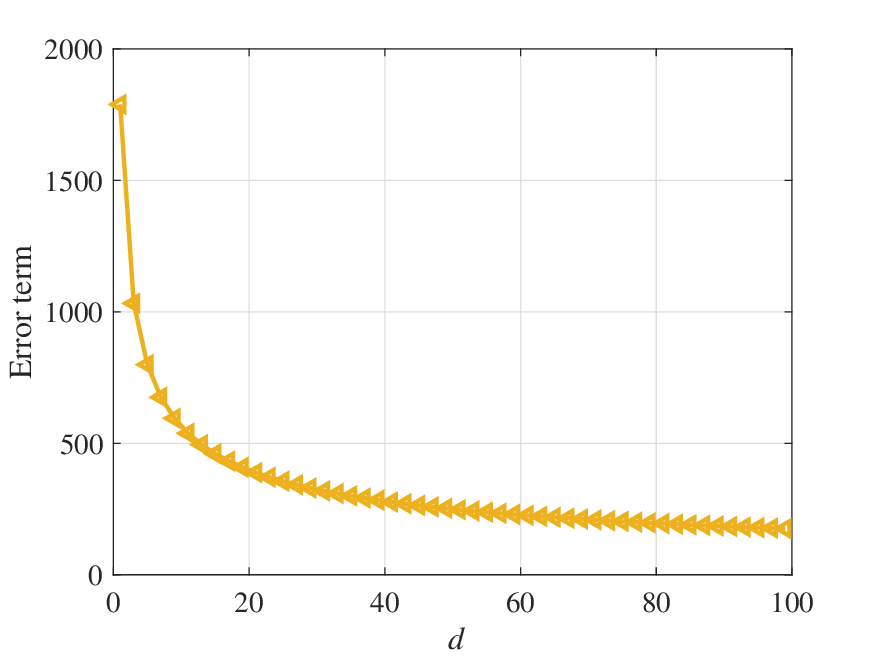}
    \caption{The error term as a function of $d$.}
    \label{fig: com_error_term d}
\end{figure}
As an illustrative example, when \( N = 100, H = 65, \kappa = 1.5, \beta = 1, d = 5 \), we plot the error term in~(\ref{error term com re}) as a function of \( \delta \) in Fig.~\ref{fig: com_error_term delta}. It can be observed that as \( \delta \) decreases, the error is reduced, while the communication overhead increases, highlighting a trade-off between learning performance and communication burden. 
In Fig.~\ref{fig: com_error_term d}, we illustrate the error term in~(\ref{error term com re}) as a function of \( d \), where \( N = 100, H = 65, \kappa = 1.5, \beta = 1, \delta = 0.5 \). The results show that increasing \( d \) leads to a reduction in the error term, which aligns with our intuition. As \( d \) increases, each device utilizes more subsets to compute local gradients in a single iteration, thereby enhancing learning performance at the cost of higher computational load on the devices.

In (\ref{converg cola lad}) in Theorem~\ref{convergence performance lad}, the error term can be expressed as
\begin{align}
    \label{error term lad}
  {\varepsilon _{{\text{LAD}}}} = \frac{{N\beta \frac{{\sqrt {\kappa {\xi _1}} }}{2}\sqrt {\frac{N}{{N - H}}}  + {\gamma ^0}\left( {L\kappa {\xi _1} + L{\xi _3}} \right)}}{{\left( {\frac{1}{N} - \sqrt {\kappa {\xi _2}} } \right) - {\gamma ^0}\left( {L\kappa {\xi _2} + L{\xi _4}} \right)}}.
\end{align}
In this scenario, by setting $d = O\left( N \right)$, when the number of devices is very large and a certain fraction are honest, the error term can be rewritten as
\begin{align}
    \label{error term approach}
   \varepsilon_\text{LAD} = O\left( {\beta ^2}\sqrt {\kappa \frac{{\left( {N - d} \right)N}}{{dH\left( {N - H} \right)}}}   \right).
\end{align}
Note that the error term derived in \cite{allouah2023fixing} can be expressed as:
\begin{align}
    \label{error term baseline}
  {\varepsilon _0} = O\left( {{\beta ^2}\kappa} \right).
\end{align}
From the comparison between (\ref{error term approach}) and (\ref{error term baseline}), we observe that the LAD error can be reduced relative to the baseline when $d \geqslant \frac{{{N^2}}}{{\kappa H\left( {N - H} \right) + N}}$. For example, when $N=100,H=65,\kappa=1.5$, a lower error can be attained by the proposed method by setting $d \geqslant 3$, which does not impose significant additional computational burden on the devices compared to the baseline. Moreover, the error decreases further as \(d\) increases. In the limiting case \(d = N\), the error term vanishes entirely, and LAD converges to the local optimum even in the presence of Byzantine attacks. 



\section{Numerical Results}
\label{simulations}
In this section, we evaluate the performance of the proposed methods on a linear regression task, under both settings with and without compressed communication. Suppose there are a total number of \(N=100\) devices. The training loss function is given as
\begin{align}
\label{LR loss}
F\left( {\mathbf{x}} \right)= \sum\limits_{k = 1}^N {{f_k}\left( {\mathbf{x}} \right)},
{f_k}\left( {\mathbf{x}} \right) = \frac{1}{2}{\left( {\left\langle {{\mathbf{x}},{{\mathbf{z}}_k}} \right\rangle  - {y_k}} \right)^2},
\end{align}
where \({{\mathbf{z}}_k} \in {\mathbb{R}^{100}}\), \(y_k \in {\mathbb{R}}\), and \(\mathbf{x} \in {\mathbb{R}^{100}}\). Here, the training dataset \(\mathcal{D}\) contains \(N=100\) data samples and is divided into \(N\) subsets, where each subset contains a single training data sample. In (\ref{LR loss}), \(\left\{ {{{\mathbf{z}}_1}, \dots, {{\mathbf{z}}_m}} \right\}\) are generated as random vectors where the elements therein are drawn independently from the Gaussian distribution \(\mathcal{N}\left( {0,100} \right)\). To generate the label \(y_k\) under data heterogeneity among the subsets, random vectors \(\mathbf{\overset{\lower0.5em\hbox{$\smash{\scriptscriptstyle\frown}$}}{x} }_k\) are generated for each subset $k$ in the following way. In \(\mathbf{\overset{\lower0.5em\hbox{$\smash{\scriptscriptstyle\frown}$}}{x} }_k\), the elements are drawn from the Gaussian distribution \(\mathcal{N}\left( {0,1+k\sigma_H} \right)\) and the label \(y_k\) can be generated as \(y_k \sim \mathcal{N}\left( {\left\langle {{{\mathbf{z}}_k},{\mathbf{\overset{\lower0.5em\hbox{$\smash{\scriptscriptstyle\frown}$}}{x} }}_k} \right\rangle ,1} \right)\), \(\forall k\). From the above process, it can be seen that a larger value of $\sigma_H$ indicates that the training data among the subsets are more heterogeneous. As a special case, when $\sigma_H=0$, the training data in each subsets are independent and identically distributed (IID). 

\subsection{Results without Compressed Communication}
Under the setting without compressed communication, the Byzantine devices execute the Sign-flipping attack \cite{zhu2023byzantine}, where the messages sent by the Byzantine devices are multiplied by a negative coefficient before the transmission and the coefficient is fixed as \(-2\). The number of the honest devices is set as $H=80$, unless specified, and the rest of the devices are Byzantine devices.  We compare our approach against the following baseline methods, which include the state-of-the-art methods: 

\begin{itemize} \item \textbf{Vanilla averaging (VA):} The training data are distributed to the devices in a non-redundant manner, and averaging-based aggregation is employed at the server to aggregate the local gradients sent by the honest devices and the messages from the Byzantine devices. 
\item \textbf{Coordinate-wise trimmed mean (CWTM)\cite{yin2018byzantine}:} The training data are distributed to the devices in a non-redundant manner. The server applies CWTM as the aggregation rule to aggregate the local gradients sent by the honest devices and the messages transmitted by the Byzantine devices. 
\item \textbf{CWTM with nearest neighbor mixing (CWTM-NNM)\cite{allouah2023fixing}:} The training data are distributed to the devices in a non-redundant manner. The server applies CWTM with nearest neighbor mixing (NNM) as the pre-aggregation step to aggregate the local gradients sent by the honest devices and the messages transmitted by the Byzantine devices. 
\item \textbf{Robust DT based on coding theory (DRACO)\cite{chen2018draco}:} Each device computes redundant local gradients and sends coded vectors to the server to fully eliminate the effects of Byzantine attacks with problem-independent robustness. This train can attain the same model as if in the adversary-free setup. \end{itemize}

Here, we adopt CWTM as the baseline method instead of other robust aggregation rules. This choice is motivated by two key factors shown in \cite{allouah2023fixing}: (i) Theoretically, CWTM has been shown to exhibit near-optimal Byzantine resilience when the number of Byzantine devices is small, and (ii) it consistently outperforms other robust aggregation rules in experiments, such as the geometric median and coordinate-wise median, in terms of learning performance under Byzantine attacks. In addition, it has been shown in \cite{allouah2023fixing} that when combined with NNM, any standard robust aggregation rule can achieve optimal Byzantine resilience among all the methods that aim to design robust aggregation rules to replace the averaging-based
aggregation rule at the server. Building on this result, we adopt CWTM-NNM as another baseline method, where NNM is applied as a pre-aggregation step at the server prior to executing CWTM. Among the gradient coding methods designed to address Byzantine attacks in DT, we select DRACO as the baseline method. This choice is motivated by two main reasons: (i) DRACO incurs no additional communication overhead during training compared to the VA method, and (ii) it is applicable to a broader range of DT settings beyond matrix multiplication tasks. For a fair comparison, we evaluate the performance of the baseline methods alongside the proposed LAD method, using CWTM and CWTM with NNM as the aggregation rules at the server. Under these two configurations, the proposed method is referred to as LAD-CWTM and LAD-CWTM-NNM, respectively. This naming reflects the fact that LAD serves as a meta-algorithm that can be combined with any robust aggregation rule.

To compare the learning performance of different methods under Byzantine attacks, Fig.~\ref{fig: compare_linear} presents the training loss as a function of the number of iterations for each method. In this setting, the learning rate is set to $\gamma^t = 0.000001$, the parameter for CWTM is set to $0.1$, and $\sigma_H = 0.3$. 
For a fair comparison, it is assumed that the entire set of training subsets is distributed to all devices before training begins, for both the baseline methods and the proposed methods. In the baseline methods, each device computes the local gradient corresponding to a single randomly selected subset, without computational redundancy among devices. This setup ensures that the learning performance is not affected by variability or persistence in the identities of the devices. It is equivalent to allocating computational tasks in LAD by setting $d = 1$. For the proposed method, we evaluate its performance under different values of $d$, where a larger value of $d$ indicates a higher computational burden on each device. 

From Fig.~\ref{fig: compare_linear}, it can be observed that VA exhibits significantly deteriorated learning performance. This is because it employs an averaging-based aggregation rule at the server, which is highly susceptible to the influence of messages sent by Byzantine devices. As a result, the applied model updates can deviate substantially from the true global gradient. 
By comparing CWTM and LAD-CWTM, we observe that the latter achieves better learning performance. This improvement is attributed to the fact that LAD-CWTM leverages computational redundancy among devices to enhance resilience against Byzantine attacks. Moreover, as the value of $d$ increases, the performance gain of LAD-CWTM becomes more pronounced, albeit at the cost of increased computational burden on the devices.
Likewise, LAD-CWTM-NNM outperforms CWTM-NNM for similar reasons, demonstrating the advantages of the proposed method under various robust aggregation rules at the server.
Comparing LAD-CWTM and LAD-CWTM-NNM, we further observe that LAD-CWTM-NNM yields enhanced learning performance. This improvement results from incorporating NNM as a pre-aggregation step at the server, which strengthens the robustness of the aggregation rule. Consequently, the overall learning performance of the proposed method is improved, as supported by Theorem~\ref{convergence performance lad}. 
In Fig.~\ref{fig: compare_linear}, we also present the performance of DRACO for reference. It can be seen that DRACO achieves the best learning performance, which aligns with our intuition that DRACO can fully eliminate the effects of Byzantine attacks, thereby attaining the same performance as in the absence of such attacks. 
However, it is important to note that this performance gain comes at the cost of a very high computational burden on each device. According to \cite{chen2018draco}, each device in DRACO is required to compute $41$ local gradients, which is equivalent to the computational burden in the proposed method when $d = 41$.
Based on these observations, the proposed method demonstrates an advantageous trade-off between computational burden and resilience to Byzantine attacks. Notably, when $d = 20$, the proposed method achieves learning performance that is very close to that of DRACO, while incurring only half of its computational burden.
\begin{figure}
    \centering
    \includegraphics[width=\linewidth]{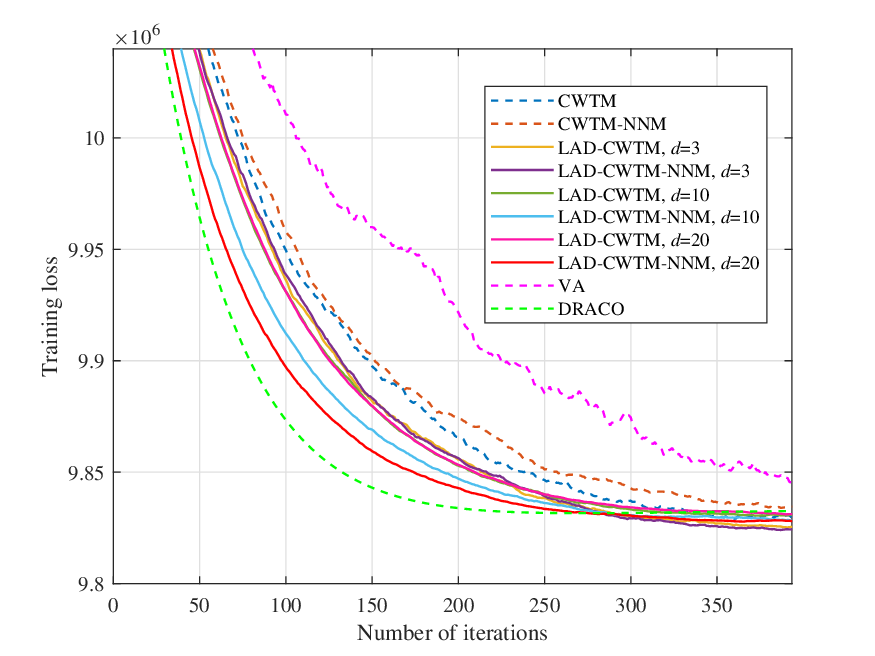}
    \caption{The training loss as a function of number of iterations for different methods.}
    \label{fig: compare_linear}
\end{figure}

To evaluate the performance of the proposed method under different levels of data heterogeneity among the subsets, Fig.~\ref{fig: heter} presents the training loss as a function of the number of iterations for various methods under different values of $\sigma_H$. In this setting, the number of Byzantine devices is fixed at $20$, the learning rate is set to $\gamma^t = 0.000001$, the parameter for CWTM is set to $0.1$, and we fix $d=10$ in the proposed method. 
We omit the performance curves of VA and DRACO due to the severe degradation of VA under Byzantine attacks and the substantial computational burden incurred by DRACO. From Fig.~\ref{fig: heter}, it can be observed that LAD consistently outperforms the baseline methods for the same value of $\sigma_H$ and when using the same robust aggregation rule at the server.
This performance advantage of LAD is more pronounced when $\sigma_H = 0.1$ compared to the case when $\sigma_H = 0$, indicating that the superiority of LAD becomes more evident as the heterogeneity among training subsets increases. This observation aligns with the motivation of this work: existing robust aggregation methods suffer from a non-diminishing solution error that worsens with increased data heterogeneity, whereas the proposed method mitigates this error through computational redundancy across devices.

\begin{figure}[htbp]
    \centering
    \includegraphics[width=\linewidth]{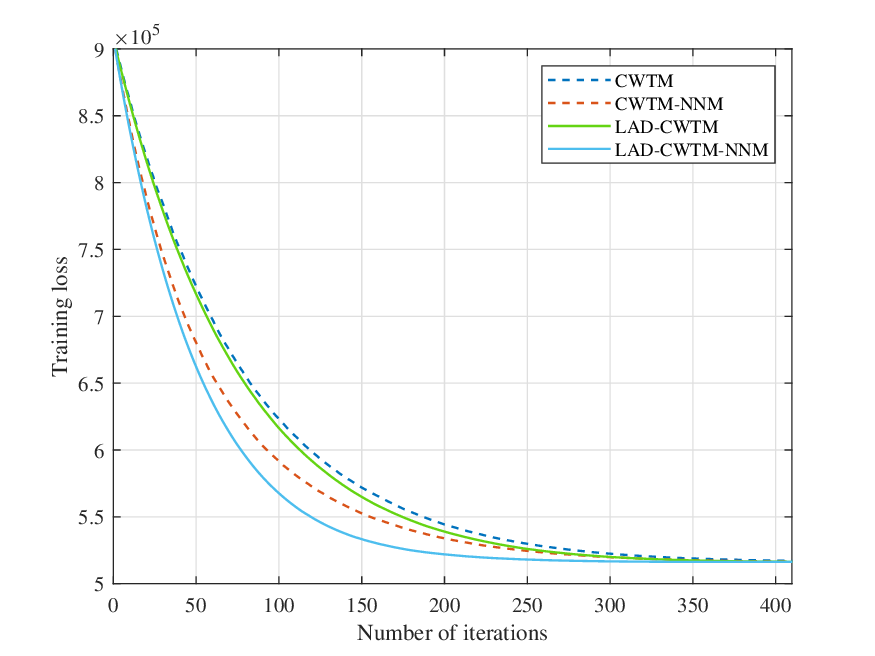}\\
    (a) \\[1em]  

    \includegraphics[width=\linewidth]{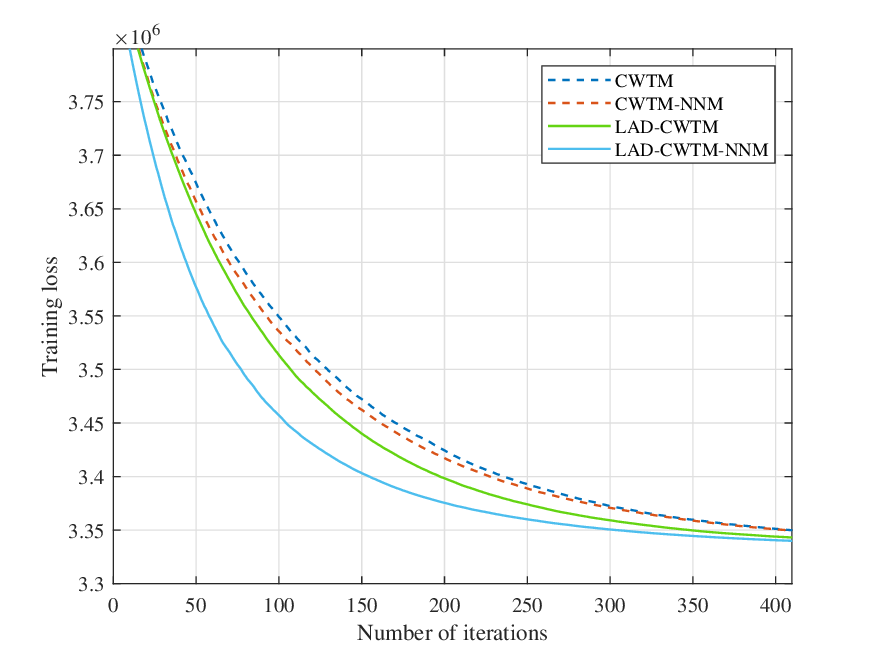}\\
    (b) 

    \caption{The training loss as a function of number of iterations for different methods under different values of $\sigma_H$. (a) $\sigma_H = 0$. (b) $\sigma_H = 0.1$.}
    \label{fig: heter}
\end{figure}

\subsection{Results with Compressed Communication}
Under the compressed communication setting, our method employs random sparsification as the compression function. In this case, each vector transmitted by the honest devices contains only \( \hat{Q} \) non-zero elements, with their indices randomly and uniformly selected.
 Based on this setup, the Byzantine devices perform a Sign-flipping attack: their messages are first multiplied by a fixed negative coefficient of $-2$ and then compressed using random sparsification before transmission. The number of honest devices is set to \( H = 70 \). For the proposed LAD method, we adopt CWTM and CWTM with NNM as the aggregation rules at the server, representing two distinct implementations. Under these configurations, our method is referred to as Com-LAD-CWTM and Com-LAD-CWTM-NNM, respectively.

We compare our approach against the following baseline methods, which include several state-of-the-art techniques:

\begin{itemize} 
\item \textbf{Vanilla averaging with compression (Com-VA):} The training data are distributed to the devices in a non-redundant manner, and averaging-based aggregation is employed at the server to aggregate the compressed version of the local gradients sent by the honest devices and the messages from the Byzantine devices. 
\item \textbf{Coordinate-wise trimmed mean with compression (Com-CWTM):} The training data are distributed to the devices in a non-redundant manner. The server applies CWTM as the aggregation rule to aggregate the compressed version of the local gradients sent by the honest devices and the messages transmitted by the Byzantine devices. This method originates from \cite{zhu2023byzantine}, where geometric median-based robust aggregation is replaced with CWTM. 
\item \textbf{CWTM with NNM with compression (Com-CWTM-NNM)}: The training data are distributed to the devices in a non-redundant manner. The server applies CWTM with NNM as the pre-aggregation step to aggregate the compressed version of the local gradients sent by the honest devices and the messages transmitted by the Byzantine devices. 
\item \textbf{Robust DT via compression and thresholding based on gradient norm (Com-TGN) \cite{ghosh2021communication}:} The training data are distributed to the devices in a non-redundant manner. The server applies thresholding based on gradient norm as the aggregation rule to aggregate the compressed version of the local gradients sent by the honest devices and the messages transmitted by the Byzantine devices. 
\end{itemize}

Note that among the existing methods addressing both Byzantine robustness and communication efficiency in DT, we do not include those that rely on variance reduction techniques, such as the methods proposed in \cite{rammal2024communication} and \cite{gorbunov2022variance}. This exclusion is due to the fact that variance reduction techniques require access to historical information from individual devices. In scenarios where device identities may vary over time, these techniques become inapplicable, as devices that are currently honest may have behaved as Byzantine in the past. Consequently, their historical information could be incorrect and misleading. In addition, unlike the setting without compressed communication, DRACO in \cite{chen2018draco} is not adopted as a baseline method here, as it is incompatible with communication compression.

To compare the learning performance of different methods with compressed communication under Byzantine attacks, Fig.~\ref{fig: compare_compress} presents the training loss as a function of the number of iterations for each method. In this experiment, the learning rate is set to $\gamma^t = 0.0000003$, the parameter for CWTM is $0.1$, the parameter for Com-TGN is $0.2$, the proposed method uses $d = 3$, $\sigma_H$ is fixed at $0.3$ and we set $\hat{Q}=30$.
From Fig.~\ref{fig: compare_compress}, we observe that Com-VA exhibits the worst learning performance. This is because the averaging-based aggregation at the server in the compressed domain based on all devices can be easily manipulated by the incorrect messages sent by the Byzantine devices. 
Comparing Com-LAD-CWTM and Com-LAD-CWTM-NNM, we observe that the latter achieves better performance. This is because incorporating NNM as a pre-aggregation step at the server enhances the robustness of the aggregation rule in the compressed domain, thereby improving the overall resilience of the system to Byzantine attacks. This observation aligns with our theoretical results in Section~\ref{performance analysis}, which suggest that employing more advanced robust aggregation rules in Com-LAD can enhance Byzantine robustness. 
Furthermore, comparing Com-CWTM and Com-LAD-CWTM, we observe that the latter attains better learning performance. This improvement is attributed to the ability of Com-LAD-CWTM to encode local gradients prior to compression, leveraging computational redundancy among devices to enhance robustness against Byzantine attacks. Likewise, the comparison between Com-CWTM-NNM and Com-LAD-CWTM-NNM shows a more pronounced performance gain for the latter, highlighting that utilizing computational redundancy can further enhance the resilience of the server-side aggregation rule.
In Fig.~\ref{fig: compare_compress}, it can be observed that the learning performance of Com-TGN is slightly better than that of Com-LAD-CWTM, but inferior to Com-LAD-CWTM-NNM. This is attributed to the fact that the aggregation rule in Com-TGN is more advanced when applied in the compressed domain compared to CWTM, which is not specifically designed for aggregation in the compressed domain. However, when combined with NNM, Com-LAD-CWTM-NNM clearly outperforms Com-TGN. This performance gain stems not only from the inherent advantages of NNM, but also from the enhanced robustness of the aggregation process, which leverages computational redundancy and encoding techniques on the devices.

It is worth noting that the improved performance of the proposed method comes at the cost of increased computational burden on the devices. However, this overhead is negligible, as even a small value of $d$—indicating a low level of computational redundancy—yields significant performance gains. This demonstrates that the proposed method achieves a favorable trade-off between computational efficiency and robustness against Byzantine attacks under compressed communication.
\begin{figure}
    \centering
    \includegraphics[width=\linewidth]{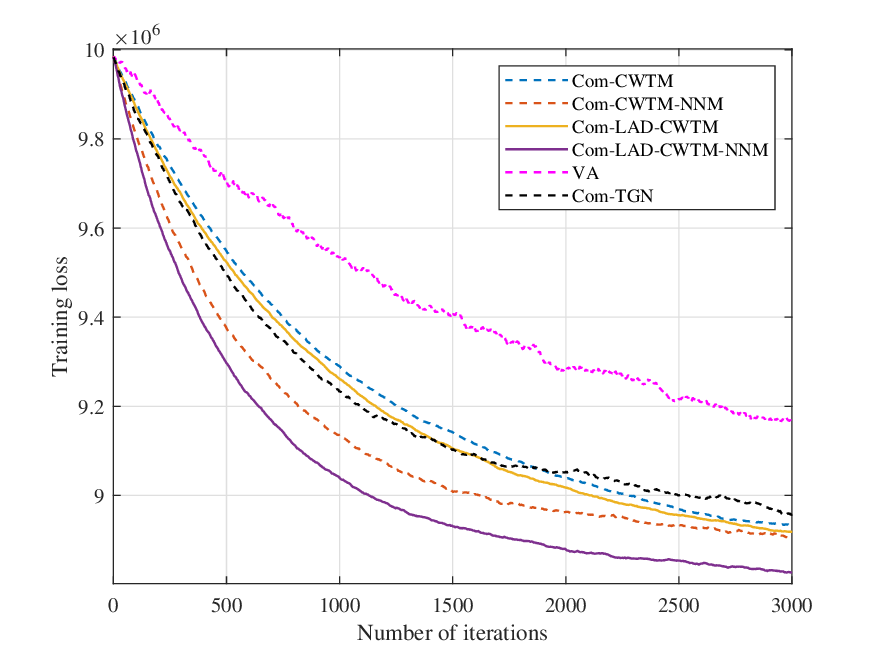}
    \caption{The training loss as a function of number of iterations for different methods with compressed communication.}
    \label{fig: compare_compress}
\end{figure}
\section{Conclusions}
\label{conclusions}
This paper investigated the problem of DT under Byzantine attacks in the presence of communication constraints. We identified a key limitation in existing robust aggregation schemes: the persistence of non-diminishing solution error when local gradients vary significantly across devices. To address this, we proposed a novel method LAD which leverages redundant gradient computation and encoding to enhance robustness. Theoretical analysis demonstrated that LAD achieves improved convergence guarantees and mitigates the impact of Byzantine devices more effectively.
To further improve communication efficiency, we extend LAD to Com-LAD by incorporating encoding and communication compression on the devices. Under this compressed communication setting, Com-LAD demonstrates improved robustness compared to the baseline methods, while significantly reducing the communication overhead in each iteration.
Extensive experimental results validated the effectiveness of both LAD and Com-LAD in improving resilience to Byzantine attacks and maintaining high learning performance under constrained communication. 
\bibliographystyle{ieeetr} 
\bibliography{reference}   

\newpage
\newpage
\appendices
\section{Proof of Lemma~\ref{lemma1}}
\label{appen lemma1}
Based on the fact that each row in $\mathbf{S}$ has exactly $d$ entries equal to one and all other entries are zero, we can easily derive that 
\begin{align}
    \label{re S}
   & \mathbb{E}\left( {{{\left\| {\left( {\frac{1}{d}\frac{1}{H}{{\mathbf{h}}^{1 \times N}}{\mathbf{S}} - \frac{1}{N}{{\mathbf{1}}^{1 \times N}}} \right)} \right\|}^2}} \right)\nonumber\\
     = & \mathbb{E}\left( {{{\left\| {\frac{1}{d}\frac{1}{H}{{\mathbf{h}}^{1 \times N}}{\mathbf{S}}} \right\|}^2}} \right) + \mathbb{E}\left( {{{\left\| {\frac{1}{N}{{\mathbf{1}}^{1 \times N}}} \right\|}^2}} \right)\nonumber\\
     &- 2\mathbb{E}\left( {\frac{1}{d}\frac{1}{H}\frac{1}{N}{{\mathbf{h}}^{1 \times N}}{\mathbf{S}}{{\mathbf{1}}^{N \times 1}}} \right)\nonumber\\
      = & \frac{1}{{{d^2}{{H}^2}}}\mathbb{E}\left( {{{\left\| {{{\mathbf{h}}^{1 \times N}}{{\mathbf{S}}}} \right\|}^2}} \right) + \frac{1}{N} - 2\frac{1}{N}  \nonumber\\
       = & \frac{1}{{{d^2}{{H}^2}}}\sum\limits_{j = 1}^N {\mathbb{E}\left[ {{{\left( {\sum\limits_{i = 1}^N {{h_i}s\left( {i,j} \right)} } \right)}^2}} \right]}  - \frac{1}{N},
\end{align}
where $h_i$ is the $i$-th element in $\mathbf{h}$.  
In (\ref{re S}), we have
\begin{align}
    \label{re S1}
    &\mathbb{E}\left[ {{{\left( {\sum\limits_{i = 1}^N {{h_i}s\left( {i,j} \right)} } \right)}^2}} \right]\nonumber\\
    =& \sum\limits_{i = 1}^N {\mathbb{E}\left[ {{{\left( {{h_i}s\left( {i,j} \right)} \right)}^2}} \right]}  + \sum\limits_{i \ne k} {\mathbb{E}\left[ {{h_i}s\left( {i,j} \right){h_k}s\left( {k,j} \right)} \right]} \nonumber\\
    = & \sum\limits_{i = 1}^N {s\left( {i,j} \right)\mathbb{E}\left( {{h_i}} \right)}  + \sum\limits_{i \ne k} {s\left( {i,j} \right)s\left( {k,j} \right)\mathbb{E}\left[ {{h_i}{h_k}} \right]} \nonumber\\
    =& \frac{H}{N}\sum\limits_{i = 1}^N {s\left( {i,j} \right)} + \frac{{H\left( {H - 1} \right)}}{{N\left( {N - 1} \right)}}\sum\limits_{i \ne k} {s\left( {i,j} \right)s\left( {k,j} \right)},
\end{align}
where the last equality holds due to the randomness of $\mathbf{h}$. 
Substituting (\ref{re S1}) into (\ref{re S}), we have
\begin{align}
    \label{re S2}
 & \mathbb{E}\left( {{{\left\| {\left( {\frac{1}{d}\frac{1}{H}{{\mathbf{h}}^{1 \times N}}{\mathbf{S}} - \frac{1}{N}{{\mathbf{1}}^{1 \times N}}} \right)} \right\|}^2}} \right) \nonumber\\
   = & \frac{1}{{{d^2}{H^2}}}\left[ {Hd + \frac{{H\left( {H - 1} \right)}}{{N\left( {N - 1} \right)}}\sum\limits_{j = 1}^N {\sum\limits_{i \ne k} {s\left( {i,j} \right)s\left( {k,j} \right)} } } \right] - \frac{1}{N}.
\end{align}
Suppose there are $\theta_j$ ones in the $j$-th column of matrix $\mathbf{S}$. Then we can rewrite $\sum\limits_{j = 1}^N {\sum\limits_{i \ne k} {s\left( {i,j} \right)s\left( {k,j} \right)} }$ in (\ref{re S2}) as 
\begin{align}
    \label{re S3}
    \sum\limits_{j = 1}^N {\sum\limits_{i \ne k} {s\left( {i,j} \right)s\left( {k,j} \right)} }  = 2\sum\limits_{j = 1}^N {\left( {\begin{array}{*{20}{c}}
  {{\theta _j}} \\ 
  2 
\end{array}} \right)}  = \sum\limits_{j = 1}^N {\theta _j^2}  - dN.
\end{align}
Note that 
\begin{align}
    \label{re S4}
    {\inf _{\left\{ {\left. {{\theta _j},\forall j} \right|\sum\limits_{j = 1}^N {{\theta _j}}  = dN} \right\}}}\sum\limits_{j = 1}^N {\theta _j^2}  - dN = {d^2}N - dN,
\end{align}
where this infimum can be attained when $\theta_1=\dots=\theta_N=d$. According to (\ref{re S2})-(\ref{re S4}), we have
\begin{align}
    \label{re S5}
   & {\inf _{{\mathbf{S}} \in \mathcal{S}}}\mathbb{E}\left( {{{\left\| {\left( {\frac{1}{d}\frac{1}{H}{{\mathbf{h}}^{1 \times N}}{\mathbf{S}} - \frac{1}{N}{{\mathbf{1}}^{1 \times N}}} \right)} \right\|}^2}} \right)\nonumber\\
    = & \frac{{\left( {N - H} \right)\left( {N - d} \right)}}{{dH\left( {N - 1} \right)N}}.
\end{align}
By constructing the matrix $\mathbf{S}$ as $\mathbf{\hat{S}}$, it holds that $\theta_1=\dots=\theta_N=d$. Based on that, the infimum in (\ref{re S5}) can be attained by setting $\mathbf{S}$ as $\mathbf{\hat{S}}$, which completes the proof. 
\section{Proof of Lemma~\ref{lemma20}}
\label{appen lemma20}
Based on (\ref{encoding}) and the randomness of the task indices received by the devices, it holds that 
\begin{align}
    \label{mean git}
    \mathbb{E}\left( {\left. {{\mathbf{g}}_i^t} \right|{\mathcal{F}^t}} \right) = {{\boldsymbol{\mu }}^t}.
\end{align}
According to (\ref{mean git}), we can derive that 
\begin{align}
    \label{lemma20-1}
    \mathbb{E}\left( {\left. {{{\left\| {{\mathbf{g}}_i^t - {{\boldsymbol{\mu }}^t}} \right\|}^2}} \right|{\mathcal{F}^t}} \right) = \mathbb{E}\left( {\left. {{{\left\| {{\mathbf{g}}_i^t} \right\|}^2}} \right|{\mathcal{F}^t}} \right) - {\left\| {{{\boldsymbol{\mu }}^t}} \right\|^2}.
\end{align}
In (\ref{lemma20-1}), we have
\begin{align}
    \label{lemma20-2}
 & \mathbb{E}\left( {\left. {{{\left\| {{\mathbf{g}}_i^t} \right\|}^2}} \right|{\mathcal{F}^t}} \right) = \mathbb{E}\left( {\left. {{{\left\| {\frac{1}{d}\sum\limits_{k = 1}^d {{{\mathbf{y}}_k}} } \right\|}^2}} \right|{\mathcal{F}^t}} \right) \nonumber\\
  = & \frac{1}{{{d^2}}}\mathbb{E}\left( {\left. {\sum\limits_{k = 1}^d {{{\left\| {{{\mathbf{y}}_k}} \right\|}^2}}  + 2\sum\limits_{k < j}^{} {\left\langle {{{\mathbf{y}}_k},{{\mathbf{y}}_j}} \right\rangle } } \right|{\mathcal{F}^t}} \right) \nonumber \\
   = & \frac{1}{{{d^2}}}\left( {\sum\limits_{k = 1}^d {\mathbb{E}\left( {\left. {{{\left\| {{{\mathbf{y}}_k}} \right\|}^2}} \right|{\mathcal{F}^t}} \right)}  + 2\sum\limits_{k < j}^{} {\mathbb{E}\left[ {\left. {\left\langle {{{\mathbf{y}}_k},{{\mathbf{y}}_j}} \right\rangle } \right|{\mathcal{F}^t}} \right]} } \right),
\end{align}
where \(\left\{ {{{\mathbf{y}}_i},i = 1,...,d} \right\}\) are the local gradients computed by device $i$. 

In (\ref{lemma20-2}), we know 
\begin{align}
    \label{lemma20-3}
   & \mathbb{E}\left( {\left. {{{\left\| {{{\mathbf{y}}_k}} \right\|}^2}} \right|{\mathcal{F}^t}} \right)\nonumber \\
    = & \frac{1}{N}\sum\limits_{k = 1}^N {{{\left\| {\nabla {f_k}\left( {{{\mathbf{x}}^t}} \right)} \right\|}^2}} \nonumber \\
    = & \frac{1}{N}\sum\limits_{k = 1}^N {{{\left\| {\nabla {f_k}\left( {{{\mathbf{x}}^t}} \right) - {{\boldsymbol{\mu }}^t} + {{\boldsymbol{\mu }}^t}} \right\|}^2}} \nonumber \\
    =& \frac{1}{N}\sum\limits_{k = 1}^N {{{\left\| {\nabla {f_k}\left( {{{\mathbf{x}}^t}} \right) - {{\boldsymbol{\mu }}^t}} \right\|}^2}}  + {\left\| {{{\boldsymbol{\mu }}^t}} \right\|^2}. 
\end{align}
In addition, based on the randomness of the task indices received by the devices and the randomness of the vector $\mathbf{p}^t$, we have
\begin{align}
    \label{lemma20-4}
   & \sum\limits_{k < j}^{} {\mathbb{E}\left[ {\left. {\left\langle {{{\mathbf{y}}_k},{{\mathbf{y}}_j}} \right\rangle } \right|{\mathcal{F}^t}} \right]} \nonumber \\
   = & \left( {\begin{array}{*{20}{c}}
  d \\ 
  2 
\end{array}} \right)\frac{1}{{\left( {\begin{array}{*{20}{c}}
  N \\ 
  2 
\end{array}} \right)}}\sum\limits_{k < j}^{} {\mathbb{E}\left[ {\left. {\left\langle {\nabla {f_k}\left( {{{\mathbf{x}}^t}} \right),\nabla {f_j}\left( {{{\mathbf{x}}^t}} \right)} \right\rangle } \right|{\mathcal{F}^t}} \right]} \nonumber\\
=  & \frac{{d\left( {d - 1} \right)}}{{N\left( {N - 1} \right)}}\frac{{{{\left\| {\sum\limits_{k = 1}^N {\nabla {f_k}\left( {{{\mathbf{x}}^t}} \right)} } \right\|}^2} - \sum\limits_{k = 1}^N {{{\left\| {\nabla {f_k}\left( {{{\mathbf{x}}^t}} \right)} \right\|}^2}} }}{2}.
\end{align} 
By substituting (\ref{lemma20-3}) and (\ref{lemma20-4}) into (\ref{lemma20-2}), we have
\begin{align}
    \label{lemma20-5}
  &  \mathbb{E}\left( {\left. {{{\left\| {{\mathbf{g}}_i^t} \right\|}^2}} \right|{\mathcal{F}^t}} \right) = \frac{{\left( {d - 1} \right)}}{{dN\left( {N - 1} \right)}}{\left\| {\sum\limits_{k = 1}^N {\nabla {f_k}\left( {{{\mathbf{x}}^t}} \right)} } \right\|^2} \nonumber \\
  &+ \left[ {\frac{1}{{Nd}} - \frac{{\left( {d - 1} \right)}}{{dN\left( {N - 1} \right)}}} \right]\sum\limits_{k = 1}^N {{{\left\| {\nabla {f_k}\left( {{{\mathbf{x}}^t}} \right)} \right\|}^2}} \nonumber\\
     = & {\left\| {{{\boldsymbol{\mu }}^t}} \right\|^2} + \frac{1}{{Nd}}\frac{{N - d}}{{N - 1}}\left( {\sum\limits_{k = 1}^N {{{\left\| {\nabla {f_k}\left( {{{\mathbf{x}}^t}} \right) - {{\boldsymbol{\mu }}^t}} \right\|}^2}} } \right)\nonumber \\
     \leqslant & {\left\| {{{\boldsymbol{\mu }}^t}} \right\|^2} + \frac{1}{d}\frac{{N - d}}{{N - 1}}{\beta ^2}.
\end{align}
Substituting (\ref{lemma20-5}) back into (\ref{lemma20-1}) yields (\ref{h1}) in Lemma~\ref{lemma20}. 
\section{Proof of Lemma~\ref{lemma2}}
\label{appen lemma2}
By applying the basic inequality as
\begin{align}
    \label{basic ineq}
    \left\| \sum_{i=1}^{n} \mathbf{x}_i \right\|^2 \leqslant n \sum_{i=1}^{n} \left\| \mathbf{x}_i \right\|^2, &\forall\mathbf{x}_i \in \mathbb{R}^{Q\times1},
\end{align}
we can obtain 
    \begin{align}
        \label{bar hat}
 & {\left\| {{{{\mathbf{\bar g}}}^t} - {\mathbf{\hat g}}_i^t} \right\|^2} \leqslant 2{\left\| {{{{\mathbf{\bar g}}}^t} - {{\boldsymbol{\mu }}^t}} \right\|^2} + 2{\left\| {{\mathbf{\hat g}}_i^t - {{\boldsymbol{\mu }}^t}} \right\|^2}   \nonumber \\
  =& 2{\left\| {{{{\mathbf{\bar g}}}^t} - {{\boldsymbol{\mu }}^t}} \right\|^2} + 2{\left\| {{\mathbf{\hat g}}_i^t - {\mathbf{g}}_i^t + {\mathbf{g}}_i^t - {{\boldsymbol{\mu }}^t}} \right\|^2} \nonumber \\
   \leqslant& 2{\left\| {{{{\mathbf{\bar g}}}^t} - {{\boldsymbol{\mu }}^t}} \right\|^2} + 4{\left\| {{\mathbf{\hat g}}_i^t - {\mathbf{g}}_i^t} \right\|^2} + 4{\left\| {{\mathbf{g}}_i^t - {{\boldsymbol{\mu }}^t}} \right\|^2}.
    \end{align}
Based on (\ref{bar hat}), we have
\begin{align}
    \label{max bar hat}
    \frac{1}{H}\sum\limits_{i \in {\mathcal{H}^t}} {{{\left\| {{{{\mathbf{\bar g}}}^t} - {\mathbf{\hat g}}_i^t} \right\|}^2}}  \leqslant &2{\left\| {{{{\mathbf{\bar g}}}^t} - {{\boldsymbol{\mu }}^t}} \right\|^2} + 4\frac{1}{H}\sum\limits_{i \in {\mathcal{H}^t}} {{{\left\| {{\mathbf{\hat g}}_i^t - {\mathbf{g}}_i^t} \right\|}^2}} \nonumber \\
    &+ 4\frac{1}{H}\sum\limits_{i \in {\mathcal{H}^t}} {{{\left\| {{\mathbf{g}}_i^t - {{\boldsymbol{\mu }}^t}} \right\|}^2}}.
\end{align}
Taking conditional expectations on both sides of (\ref{max bar hat}), we can derive 
\begin{align}
    \label{exp bar hat}
    &\mathbb{E}\left( {\left. {\frac{1}{H}\sum\limits_{i \in {\mathcal{H}^t}} {{{\left\| {{{{\mathbf{\bar g}}}^t} - {\mathbf{\hat g}}_i^t} \right\|}^2}} } \right|{\mathcal{F}^t}} \right) \nonumber\\
    \leqslant& 2\mathbb{E}\left( {\left. {{{\left\| {{{{\mathbf{\bar g}}}^t} - {{\boldsymbol{\mu }}^t}} \right\|}^2}} \right|{\mathcal{F}^t}} \right) + 4\mathbb{E}\left( {\left. {\frac{1}{H}\sum\limits_{i \in {\mathcal{H}^t}} {{{\left\| {{\mathbf{\hat g}}_i^t - {\mathbf{g}}_i^t} \right\|}^2}} } \right|{\mathcal{F}^t}} \right)\nonumber \\
    &+ 4\mathbb{E}\left( {\left. {\frac{1}{H}\sum\limits_{i \in {\mathcal{H}^t}} {{{\left\| {{\mathbf{g}}_i^t - {{\boldsymbol{\mu }}^t}} \right\|}^2}} } \right|{\mathcal{F}^t}} \right)\nonumber\\
    =& 2\mathbb{E}\left( {\left. {{{\left\| {{{{\mathbf{\bar g}}}^t} - {{\boldsymbol{\mu }}^t}} \right\|}^2}} \right|{\mathcal{F}^t}} \right) + 4\mathbb{E}\left( {\left. {\frac{1}{H}\sum\limits_{i \in {\mathcal{H}^t}} {{{\left\| {{\mathbf{\hat g}}_i^t - {\mathbf{g}}_i^t} \right\|}^2}} } \right|{\mathcal{F}^t}} \right)\nonumber \\
    &+ 4\mathbb{E}\left( {\left. {{{\left\| {{\mathbf{g}}_i^t - {{\boldsymbol{\mu }}^t}} \right\|}^2}} \right|{\mathcal{F}^t}} \right), 
\end{align}
For the first term in (\ref{exp bar hat}), we have
\begin{align}
    \label{1st term}
   & \mathbb{E}\left( {\left. {{{\left\| {{{{\mathbf{\bar g}}}^t} - {{\boldsymbol{\mu }}^t}} \right\|}^2}} \right|{\mathcal{F}^t}} \right)\nonumber \\
    =& \mathbb{E}\left( {\left. {{{\left\| {\frac{1}{H}\sum\limits_{i \in {\mathcal{H}^t}} {{\mathbf{\hat g}}_i^t}  - \frac{1}{H}\sum\limits_{i \in {\mathcal{H}^t}} {{\mathbf{g}}_i^t}  + \frac{1}{H}\sum\limits_{i \in {\mathcal{H}^t}} {{\mathbf{g}}_i^t}  - {{\boldsymbol{\mu }}^t}} \right\|}^2}} \right|{\mathcal{F}^t}} \right)\nonumber\\
    \mathop  \leqslant \limits^{\left\langle 1 \right\rangle }  & 2\frac{1}{{{H^2}}}\mathbb{E}\left( {\left. {{{\left\| {\sum\limits_{i \in {\mathcal{H}^t}} {\left( {{\mathbf{\hat g}}_i^t - {\mathbf{g}}_i^t} \right)} } \right\|}^2}} \right|{\mathcal{F}^t}} \right)\nonumber \\
    &+ 2\mathbb{E}\left( {\left. {{{\left\| {\frac{1}{H}\sum\limits_{i \in {\mathcal{H}^t}} {{\mathbf{g}}_i^t}  - {{\boldsymbol{\mu }}^t}} \right\|}^2}} \right|{\mathcal{F}^t}} \right)\nonumber\\
     =& 2\frac{1}{{{H^2}}}\mathbb{E}\left[ {\left. {\sum\limits_{i \in {\mathcal{H}^t}} {\mathbb{E}\left( {\left. {{{\left\| {{\mathbf{\hat g}}_i^t - {\mathbf{g}}_i^t} \right\|}^2}} \right|{\mathcal{F}^t},\left\{ {\mathcal{T}_i^t},{{\mathbf{p}}^t} \right\}} \right)} } \right|{\mathcal{F}^t}} \right] \nonumber \\
     &+ 2\mathbb{E}\left( {\left. {{{\left\| {\frac{1}{H}\sum\limits_{i \in {\mathcal{H}^t}} {{\mathbf{g}}_i^t}  - {{\boldsymbol{\mu }}^t}} \right\|}^2}} \right|{\mathcal{F}^t}} \right)\nonumber\\
      \mathop  \leqslant \limits^{\left\langle 2 \right\rangle } & 2\frac{1}{{{H^2}}}\delta {\mathbb{E}_{\left\{ {\mathcal{T}_i^t},{{\mathbf{p}}^t} \right\}}}\left[ {\left. {\sum\limits_{i \in {\mathcal{H}^t}} {{{\left\| {{\mathbf{g}}_i^t} \right\|}^2}} } \right|{\mathcal{F}^t}} \right]\nonumber \\
      &+ 2\mathbb{E}\left( {\left. {{{\left\| {\frac{1}{H}\sum\limits_{i \in {\mathcal{H}^t}} {{\mathbf{g}}_i^t}  - {{\boldsymbol{\mu }}^t}} \right\|}^2}} \right|{\mathcal{F}^t}} \right)\nonumber\\
       = & 2\frac{1}{{{H^2}}}\delta \sum\limits_{i \in {\mathcal{H}^t}} {\mathbb{E}\left( {\left. {{{\left\| {{\mathbf{g}}_i^t} \right\|}^2}} \right|{\mathcal{F}^t}} \right)} \nonumber \\
       &+ 2\mathbb{E}\left( {\left. {{{\left\| {\frac{1}{H}\sum\limits_{i \in {\mathcal{H}^t}} {{\mathbf{g}}_i^t}  - {{\boldsymbol{\mu }}^t}} \right\|}^2}} \right|{\mathcal{F}^t}} \right),
\end{align}
where ${\left\langle 1 \right\rangle }$ is obtained from (\ref{basic ineq}) and ${\left\langle 2 \right\rangle }$ holds due to (\ref{def unbiased}) and (\ref{def unbiased bound}). 

Substituting (\ref{1st term}) into (\ref{exp bar hat}), we have
\begin{align}
    \label{exp bar hat 1}
    &\mathbb{E}\left( {\left. {\frac{1}{H}\sum\limits_{i \in {\mathcal{H}^t}} {{{\left\| {{{{\mathbf{\bar g}}}^t} - {\mathbf{\hat g}}_i^t} \right\|}^2}} } \right|{\mathcal{F}^t}} \right) \nonumber\\
    \leqslant & 4\frac{1}{{{H^2}}}\delta \sum\limits_{i \in {\mathcal{H}^t}} {\mathbb{E}\left( {\left. {{{\left\| {{\mathbf{g}}_i^t} \right\|}^2}} \right|{\mathcal{F}^t}} \right)}  + 4\mathbb{E}\left( {\left. {{{\left\| {\frac{1}{H}\sum\limits_{i \in {\mathcal{H}^t}} {{\mathbf{g}}_i^t}  - {{\boldsymbol{\mu }}^t}} \right\|}^2}} \right|{\mathcal{F}^t}} \right)\nonumber \\
       & + 4\mathbb{E}\left( {\left. {\frac{1}{H}\sum\limits_{i \in {\mathcal{H}^t}} {{{\left\| {{\mathbf{\hat g}}_i^t - {\mathbf{g}}_i^t} \right\|}^2}} } \right|{\mathcal{F}^t}} \right)+ 4\mathbb{E}\left( {\left. {{{\left\| {{\mathbf{g}}_i^t - {{\boldsymbol{\mu }}^t}} \right\|}^2}} \right|{\mathcal{F}^t}} \right)\nonumber\\
    \leqslant & \left( {\frac{{4\delta }}{{{H^2}}} + \frac{{4\delta }}{H}} \right)\sum\limits_{i \in {\mathcal{H}^t}} {\mathbb{E}\left( {\left. {{{\left\| {{\mathbf{g}}_i^t} \right\|}^2}} \right|{\mathcal{F}^t}} \right)}\nonumber \\
       &  + 4\mathbb{E}\left( {\left. {{{\left\| {\frac{1}{H}\sum\limits_{i \in {\mathcal{H}^t}} {{\mathbf{g}}_i^t}  - {{\boldsymbol{\mu }}^t}} \right\|}^2}} \right|{\mathcal{F}^t}} \right) + 4\mathbb{E}\left( {\left. {{{\left\| {{\mathbf{g}}_i^t - {{\boldsymbol{\mu }}^t}} \right\|}^2}} \right|{\mathcal{F}^t}} \right).
\end{align}
In (\ref{exp bar hat 1}), by defining 
\begin{align}
    \label{def G}
    {{\mathbf{G}}^t} \triangleq {\left[ {\nabla {f_{p_1^t}}\left( {\mathbf{x}} \right),...,\nabla {f_{p_N^t}}\left( {\mathbf{x}} \right)} \right]^T},
\end{align}
we can express
\begin{align}
    \label{1st term 1}
  &  \mathbb{E}\left( {\left. {{{\left\| {\frac{1}{H}\sum\limits_{i \in {\mathcal{H}^t}} {{\mathbf{g}}_i^t}  - {{\boldsymbol{\mu }}^t}} \right\|}^2}} \right|{\mathcal{F}^t}} \right) \nonumber \\
       =& \mathbb{E}\left( {\left. {{{\left\| {\frac{1}{d}\frac{1}{H}{{\mathbf{h}}^{1 \times N}}{{\mathbf{\hat{S}}}}{{\mathbf{G}}^t} - \frac{1}{N}{{\mathbf{1}}^{1 \times N}}{{\mathbf{G}}^t}} \right\|}^2}} \right|{\mathcal{F}^t}} \right)\nonumber\\
     \leqslant & \mathbb{E}\left( {\left. {{{\left\| {\left( {\frac{1}{d}\frac{1}{H}{{\mathbf{h}}^{1 \times N}}{{\mathbf{\hat{S}}}} - \frac{1}{N}{{\mathbf{1}}^{1 \times N}}} \right)} \right\|}^2}{{\left\| {{{\mathbf{G}}^t}} \right\|}^2_F}} \right|{\mathcal{F}^t}} \right)\nonumber\\
     =&\left\| {{{\mathbf{G}}^t}} \right\|_F^2\mathbb{E}\left( {\left. {{{\left\| {\left( {\frac{1}{d}\frac{1}{H}{{\mathbf{h}}^{1 \times N}}{\mathbf{\hat S}} - \frac{1}{N}{{\mathbf{1}}^{1 \times N}}} \right)} \right\|}^2}} \right|{\mathcal{F}^t}} \right),
\end{align}
where we have
\begin{align}
    \label{G norm}
  &\left\| {{{\mathbf{G}}^t}} \right\|_F^2 = \sum\limits_k {{{\left\| {\nabla {f_k}\left( {{{\mathbf{x}}^t}} \right)} \right\|}^2}} \nonumber \\
        = & \sum\limits_k {{{\left\| {\nabla {f_k}\left( {{{\mathbf{x}}^t}} \right) - \frac{1}{N}\nabla F\left( {{{\mathbf{x}}^t}} \right) + \frac{1}{N}\nabla F\left( {{{\mathbf{x}}^t}} \right)} \right\|}^2}}  \nonumber \\
  = &\sum\limits_k {{{\left\| {\nabla {f_k}\left( {{{\mathbf{x}}^t}} \right) - \frac{1}{N}\nabla F\left( {{{\mathbf{x}}^t}} \right)} \right\|}^2}}  + \sum\limits_k {{{\left\| {\frac{1}{N}\nabla F\left( {{{\mathbf{x}}^t}} \right)} \right\|}^2}}  \nonumber \\
  \mathop  \leqslant \limits^{\left\langle 1 \right\rangle } & N{\beta ^2} + \frac{1}{N}{\left\| {\nabla F\left( {{{\mathbf{x}}^t}} \right)} \right\|^2},
\end{align}
with $\left\langle 1 \right\rangle$ derived from Assumption~\ref{assp bounded heter}. 

Substituting (\ref{S bound}) and (\ref{G norm}) into (\ref{1st term 1}) yields 
\begin{align}
    \label{1st term 2}
& \mathbb{E}\left( {\left. {{{\left\| {\frac{1}{H}\sum\limits_{i \in {\mathcal{H}^t}} {{\mathbf{g}}_i^t}  - {{\boldsymbol{\mu }}^t}} \right\|}^2}} \right|{\mathcal{F}^t}} \right) \nonumber \\
       \leqslant & N{\beta ^2}\frac{{\left( {N - H} \right)\left( {N - d} \right)}}{{dH\left( {N - 1} \right)N}}\nonumber \\
       & + \frac{1}{N}{\left\| {\nabla F\left( {{{\mathbf{x}}^t}} \right)} \right\|^2}\frac{{\left( {N - H} \right)\left( {N - d} \right)}}{{dH\left( {N - 1} \right)N}}.
\end{align}

Substituting (\ref{h1}) and (\ref{1st term 2}) into (\ref{exp bar hat 1}), we have
\begin{align}
    \label{1st term 3}
  &  \mathbb{E}\left( {\left. {\frac{1}{H}\sum\limits_{i \in {\mathcal{H}^t}} {{{\left\| {{{{\mathbf{\bar g}}}^t} - {\mathbf{\hat g}}_i^t} \right\|}^2}} } \right|{\mathcal{F}^t}} \right)\nonumber\\
    \leqslant & \left( {\frac{{4\delta }}{{{H^2}}} + \frac{{4\delta }}{H}} \right)\sum\limits_{i \in {\mathcal{H}^t}} {\mathbb{E}\left( {\left. {{{\left\| {{\mathbf{g}}_i^t} \right\|}^2}} \right|{\mathcal{F}^t}} \right)} \nonumber \\
       & + 4N{\beta ^2}\frac{{\left( {N - H} \right)\left( {N - d} \right)}}{{dH\left( {N - 1} \right)N}}  \nonumber \\
       &+ 4\frac{1}{N}{\left\| {\nabla F\left( {{{\mathbf{x}}^t}} \right)} \right\|^2}\frac{{\left( {N - H} \right)\left( {N - d} \right)}}{{dH\left( {N - 1} \right)N}} \nonumber\\
    &+ 4\frac{{\left( {N - d} \right){\beta ^2}}}{{d\left( {N - 1} \right)}}.
\end{align}
In (\ref{1st term 3}), we have
\begin{align}
    \label{1st term 4}
    &\sum\limits_{i \in {\mathcal{H}^t}} {\mathbb{E}\left( {\left. {{{\left\| {{\mathbf{g}}_i^t} \right\|}^2}} \right|{\mathcal{F}^t}} \right)}  = \frac{H}{{N{d^2}}}\left\| {{{\mathbf{\hat{S}}}}{{\mathbf{G}}^t}} \right\|_F^2 \leqslant \frac{H}{{N{d^2}}}\left\| {{{\mathbf{\hat{S}}}}} \right\|_F^2\left\| {{{\mathbf{G}}^t}} \right\|_F^2 \nonumber \\
       &= \frac{H}{d}\left( {N{\beta ^2} + \frac{1}{N}{{\left\| {\nabla F\left( {{{\mathbf{x}}^t}} \right)} \right\|}^2}} \right),
\end{align}
where the last inequality holds due to (\ref{G norm}). 
Substituting (\ref{1st term 4}) into (\ref{1st term 3}), we have
\begin{align}
    \label{1st term 5}
  &  \mathbb{E}\left( {\left. {\frac{1}{H}\sum\limits_{i \in {\mathcal{H}^t}} {{{\left\| {{{{\mathbf{\bar g}}}^t} - {\mathbf{\hat g}}_i^t} \right\|}^2}} } \right|{\mathcal{F}^t}} \right) \nonumber \\
        \leqslant & \left[ {\left( {\frac{1}{H} + 1} \right)\frac{{4\delta }}{d} + 4\frac{{\left( {N - H} \right)\left( {N - d} \right)}}{{dH\left( {N - 1} \right)N}}} \right] \nonumber\\
      & \times \left( {N{\beta ^2} + \frac{1}{N}{{\left\| {\nabla F\left( {{{\mathbf{x}}^t}} \right)} \right\|}^2}} \right) \nonumber \\
       &+ 4\frac{{\left( {N - d} \right){\beta ^2}}}{{d\left( {N - 1} \right)}}.
\end{align}
Based on Definition~\ref{def:resaveraging}, we have
\begin{align}
    \label{agg bound}
    \mathbb{E}\left( {\left. {{{\left\| {{{{\mathbf{\hat g}}}^t} - {{{\mathbf{\bar g}}}^t}} \right\|}^2}} \right|{\mathcal{F}^t}} \right) \leqslant \kappa\mathbb{E}\left( {\left. {\frac{1}{H}\sum\limits_{i \in {\mathcal{H}^t}} {{{\left\| {{{{\mathbf{\bar g}}}^t} - {\mathbf{\hat g}}_i^t} \right\|}^2}} } \right|{\mathcal{F}^t}} \right).
\end{align}
Combining (\ref{1st term 5}) and (\ref{agg bound}) completes the proof. 

\section{Proof of Corollary~\ref{coro}}
\label{appen coro1}
From the derivation of (\ref{1st term 2}), according to Lemma~\ref{lemma1}, the infimum of the upper bound of $\mathbb{E}\left( {\left. {{{\left\| {\frac{1}{H}\sum\limits_{i \in {\mathcal{H}^t}} {{\mathbf{g}}_i^t}  - {{\boldsymbol{\mu }}^t}} \right\|}^2}} \right|{\mathcal{F}^t}} \right)$ can be attained when constructing the computation task matrix as $\mathbf{\hat{S}}$. Based on that, from (\ref{exp bar hat 1}) and (\ref{agg bound}), Corollary~\ref{coro} can be proved easily. 
\section{Proof of Lemma~\ref{lemma4}}
\label{appen lemma4}
We can easily derive that 
\begin{align}
    \label{proof 4 1}
  &  \mathbb{E}\left( {\left. {{{\left\| {{{{\mathbf{\bar g}}}^t}} \right\|}^2}} \right|{\mathcal{F}^t}} \right) = \mathbb{E}\left( {\left. {{{\left\| {{{{\mathbf{\bar g}}}^t} - {{\boldsymbol{\mu }}^t} + {{\boldsymbol{\mu }}^t}} \right\|}^2}} \right|{\mathcal{F}^t}} \right)   \nonumber \\ \mathop        \leqslant  \limits^{\left\langle 1 \right\rangle } & 2\mathbb{E}\left( {\left. {{{\left\| {{{{\mathbf{\bar g}}}^t} - {{\boldsymbol{\mu }}^t}} \right\|}^2}} \right|{\mathcal{F}^t}} \right) + 2\mathbb{E}\left( {\left. {{{\left\| {{{\boldsymbol{\mu }}^t}} \right\|}^2}} \right|{\mathcal{F}^t}} \right)\nonumber\\
      \mathop  \leqslant  \limits^{\left\langle 2 \right\rangle } & 4\frac{1}{{{H^2}}}\delta \sum\limits_{i \in {\mathcal{H}^t}} {\mathbb{E}\left( {\left. {{{\left\| {{\mathbf{g}}_i^t} \right\|}^2}} \right|{\mathcal{F}^t}} \right)} \nonumber\\
      &+ 4\mathbb{E}\left( {\left. {{{\left\| {\frac{1}{H}\sum\limits_{i \in {\mathcal{H}^t}} {{\mathbf{g}}_i^t}  - {{\boldsymbol{\mu }}^t}} \right\|}^2}} \right|{\mathcal{F}^t}} \right) + 2\mathbb{E}\left( {\left. {{{\left\| {{{\boldsymbol{\mu }}^t}} \right\|}^2}} \right|{\mathcal{F}^t}} \right)\nonumber\\
       \mathop  \leqslant  \limits^{\left\langle 3 \right\rangle }  &\left[ {\frac{{4\delta }}{{Hd}} + 4\frac{{\left( {N - H} \right)\left( {N - d} \right)}}{{dH\left( {N - 1} \right)N}}} \right]\left( {N{\beta ^2} + \frac{1}{N}{{\left\| {\nabla F\left( {{{\mathbf{x}}^t}} \right)} \right\|}^2}} \right)\nonumber\\
       &+ 2\frac{1}{{{N^2}}}{\left\| {\nabla F\left( {{{\mathbf{x}}^t}} \right)} \right\|^2},
\end{align}
where $\left\langle 1 \right\rangle$ holds due to (\ref{basic ineq}), $\left\langle 2 \right\rangle$ is obtained by substituting (\ref{1st term}) into (\ref{proof 4 1}), and $\left\langle 3 \right\rangle$ is derived by substituting (\ref{1st term 2}) and (\ref{1st term 4}) into (\ref{proof 4 1}). 
\section{Proof of Theorem~\ref{convergence performance}}
\label{appen proof the}
Based on (\ref{honest avg}), we have
\begin{align}
    \label{exp honest avg}
 & \mathbb{E}\left[ {\left. {{{{\mathbf{\bar g}}}^t}} \right|{\mathcal{F}^t}} \right] = \frac{1}{H}\sum\limits_{i \in {\mathcal{H}^t}} {\mathbb{E}\left[ {\left. {{\mathbf{\hat g}}_i^t} \right|{\mathcal{F}^t}} \right]}  = \frac{1}{H}\sum\limits_{i \in {\mathcal{H}^t}} {\mathbb{E}\left[ {\left. {{\mathbf{g}}_i^t} \right|{\mathcal{F}^t}} \right]} \nonumber\\
  &= \frac{1}{H}\sum\limits_{i \in {\mathcal{H}^t}} {\mathbb{E}\left[ {\left. {\sum\limits_{k \in \left\{ {\left. k \right|{\hat{s}}\left( {\mathcal{T}_i^t,k} \right) \ne 0} \right\}} {\frac{1}{d}} \nabla {f_k}\left( {{{\mathbf{x}}^t}} \right)} \right|{\mathcal{F}^t}} \right]} \nonumber\\
  =& \frac{1}{H}\sum\limits_{i \in {\mathcal{H}^t}} {\mathbb{E}\left[ {\left. {\frac{1}{N}\nabla F\left( {{{\mathbf{x}}^t}} \right)} \right|{\mathcal{F}^t}} \right]}  
   = \boldsymbol{\mu}^t,
\end{align}
where the second equality holds due to (\ref{def unbiased}) and (\ref{def unbiased bound}), the third equality holds according to (\ref{encoding}), and the fourth equality can be derived from the randomness of the task indices received by the devices and the construction of the computation task matrix. 
Based on (\ref{exp honest avg}) and Assumption~\ref{smooth assumption}, we can derive 
\begin{align}
    \label{smooth proof}
  &\mathbb{E}\left[ {\left. {F\left( {{{\mathbf{x}}^{t + 1}}} \right)} \right|{\mathcal{F}^t}} \right] \nonumber \\
   \leqslant& F\left( {{{\mathbf{x}}^t}} \right) + \mathbb{E}\left[ {\left. {\left\langle {\nabla F\left( {{{\mathbf{x}}^t}} \right),{{\mathbf{x}}^{t + 1}} - {{\mathbf{x}}^t}} \right\rangle } \right|{\mathcal{F}^t}} \right] \nonumber \\
   &+ \frac{L}{2}\mathbb{E}\left( {\left. {{{\left\| {{{\mathbf{x}}^{t + 1}} - {{\mathbf{x}}^t}} \right\|}^2}} \right|{\mathcal{F}^t}} \right) \nonumber \\
   \mathop  = \limits^{\left\langle 1 \right\rangle } & F\left( {{{\mathbf{x}}^t}} \right) - {\gamma ^t}\mathbb{E}\left[ {\left. {\left\langle {\nabla F\left( {{{\mathbf{x}}^t}} \right),{{{\mathbf{\hat g}}}^t}} \right\rangle } \right|{\mathcal{F}^t}} \right]+ \frac{L}{2}\mathbb{E}\left( {\left. {{{\left\| {{\gamma ^t}{{{\mathbf{\hat g}}}^t}} \right\|}^2}} \right|{\mathcal{F}^t}} \right) \nonumber \\
   =& F\left( {{{\mathbf{x}}^t}} \right) - {\gamma ^t}\mathbb{E}\left[ {\left. {\left\langle {\nabla F\left( {{{\mathbf{x}}^t}} \right),{{{\mathbf{\hat g}}}^t} - {{{\mathbf{\bar g}}}^t} + {{{\mathbf{\bar g}}}^t}} \right\rangle } \right|{\mathcal{F}^t}} \right] \nonumber \\ 
   &+ \frac{{L{{\left( {{\gamma ^t}} \right)}^2}}}{2}\mathbb{E}\left( {\left. {{{\left\| {{{{\mathbf{\hat g}}}^t} - {{{\mathbf{\bar g}}}^t} + {{{\mathbf{\bar g}}}^t}} \right\|}^2}} \right|{\mathcal{F}^t}} \right) \nonumber \\
   \mathop  \leqslant \limits^{\left\langle 2 \right\rangle }& F\left( {{{\mathbf{x}}^t}} \right) - {\gamma ^t}\mathbb{E}\left[ {\left. {\left\langle {\nabla F\left( {{{\mathbf{x}}^t}} \right),{{{\mathbf{\hat g}}}^t} - {{{\mathbf{\bar g}}}^t}} \right\rangle } \right|{\mathcal{F}^t}} \right] \nonumber \\
   &- {\gamma ^t}\mathbb{E}\left[ {\left. {\left\langle {\nabla F\left( {{{\mathbf{x}}^t}} \right),{{{\mathbf{\bar g}}}^t}} \right\rangle } \right|{\mathcal{F}^t}} \right] \nonumber \\
   &+ L{\left( {{\gamma ^t}} \right)^2}\mathbb{E}\left( {\left. {{{\left\| {{{{\mathbf{\hat g}}}^t} - {{{\mathbf{\bar g}}}^t}} \right\|}^2}} \right|{\mathcal{F}^t}} \right) + L{\left( {{\gamma ^t}} \right)^2}\mathbb{E}\left( {\left. {{{\left\| {{{{\mathbf{\bar g}}}^t}} \right\|}^2}} \right|{\mathcal{F}^t}} \right) \nonumber \\
  \mathop  \leqslant \limits^{\left\langle 3 \right\rangle }& F\left( {{{\mathbf{x}}^t}} \right) + {\gamma ^t}\frac{{ {\eta{\left\| {\nabla F\left( {{{\mathbf{x}}^t}} \right)} \right\|}^2} + \eta^{-1}\mathbb{E}\left[ {\left. {{{\left\| {{{{\mathbf{\hat g}}}^t} - {{{\mathbf{\bar g}}}^t}} \right\|}^2}} \right|{\mathcal{F}^t}} \right]}}{2}\nonumber \\
  &- {\gamma ^t}\frac{1}{N}{\left\| {\nabla F\left( {{{\mathbf{x}}^t}} \right)} \right\|^2} \nonumber \\
  & + L{\left( {{\gamma ^t}} \right)^2}\mathbb{E}\left( {\left. {{{\left\| {{{{\mathbf{\hat g}}}^t} - {{{\mathbf{\bar g}}}^t}} \right\|}^2}} \right|{\mathcal{F}^t}} \right) + L{\left( {{\gamma ^t}} \right)^2}\mathbb{E}\left( {\left. {{{\left\| {{{{\mathbf{\bar g}}}^t}} \right\|}^2}} \right|{\mathcal{F}^t}} \right),
\end{align}
$\forall \eta>0$, where $\left\langle 1 \right\rangle$ can be derived by substituting (\ref{update global model}) into (\ref{smooth proof}), $\left\langle 2 \right\rangle$ holds due to (\ref{basic ineq}), and $\left\langle 3 \right\rangle$ is obtained by applying Young's Inequality together with (\ref{exp honest avg}). 

Substituting (\ref{bound honest avg}) and (\ref{g bar}) into (\ref{smooth proof}), we have
\begin{align}
    \label{smooth proof1}
   & \mathbb{E}\left[ {\left. {F\left( {{{\mathbf{x}}^{t + 1}}} \right)} \right|{\mathcal{F}^t}} \right] \nonumber\\
    \leqslant &  F\left( {{{\mathbf{x}}^t}} \right) + {\gamma ^t}\frac{{\eta {{\left\| {\nabla F\left( {{{\mathbf{x}}^t}} \right)} \right\|}^2} + {\eta ^{ - 1}}\kappa {\kappa _1} + {\eta ^{ - 1}}\kappa {\kappa _2}{{\left\| {\nabla F\left( {{{\mathbf{x}}^t}} \right)} \right\|}^2}}}{2} \nonumber\\
    &- {\gamma ^t}\frac{1}{N}{\left\| {\nabla F\left( {{{\mathbf{x}}^t}} \right)} \right\|^2} \nonumber \\
 &  + L{\left( {{\gamma ^t}} \right)^2}\left[ {\kappa {\kappa _1} + \kappa {\kappa _2}{{\left\| {\nabla F\left( {{{\mathbf{x}}^t}} \right)} \right\|}^2}} \right]\nonumber\\
 &+ L{\left( {{\gamma ^t}} \right)^2}\left( {{\kappa _3} + {\kappa _4}{{\left\| {\nabla F\left( {{{\mathbf{x}}^t}} \right)} \right\|}^2}} \right).
\end{align}
Taking full expectations on both sides of (\ref{smooth proof1}) and rearranging the terms, we obtain
\begin{align}
    \label{smooth proof1 exp}
 & \mathbb{E}\left[ {F\left( {{{\mathbf{x}}^{t + 1}}} \right)} \right] \nonumber\\
 \leqslant & \mathbb{E}\left[ {F\left( {{{\mathbf{x}}^t}} \right)} \right]+ {\gamma ^t}\frac{{{\eta ^{ - 1}}\kappa {\kappa _1}}}{2} + {\left( {{\gamma ^t}} \right)^2}\left( {L\kappa {\kappa _1} + L{\kappa _3}} \right)\nonumber\\
& + \mathbb{E}\left[ {{{\left\| {\nabla F\left( {{{\mathbf{x}}^t}} \right)} \right\|}^2}} \right] \nonumber\\
&\times \left[ {{\gamma ^t}\left( {\frac{{\eta  + {\eta ^{ - 1}}\kappa {\kappa _2}}}{2} - \frac{1}{N}} \right) + {{\left( {{\gamma ^t}} \right)}^2}\left( {L\kappa {\kappa _2} + L{\kappa _4}} \right)} \right].
\end{align}
Let us take $\eta  = \sqrt {\kappa{\kappa _2}}$. In this case, (\ref{smooth proof1 exp}) can be rewritten as
\begin{align}
    \label{smooth exp2}
  &\mathbb{E}\left[ {{{\left\| {\nabla F\left( {{{\mathbf{x}}^t}} \right)} \right\|}^2}} \right]\left[ {{\gamma ^t}\left( {\frac{1}{N} - \sqrt {\kappa {\kappa _2}} } \right) - {{\left( {{\gamma ^t}} \right)}^2}\left( {L\kappa {\kappa _2} + L{\kappa _4}} \right)} \right] \nonumber \\
  & \leqslant \mathbb{E}\left[ {F\left( {{{\mathbf{x}}^t}} \right)} \right] - \mathbb{E}\left[ {F\left( {{{\mathbf{x}}^{t + 1}}} \right)} \right] + {\gamma ^t}\frac{{\kappa {\kappa _1}}}{{2\sqrt {\kappa {\kappa _2}} }} \nonumber\\
  &+ {\left( {{\gamma ^t}} \right)^2}\left( {L\kappa {\kappa _1} + L{\kappa _3}} \right).
\end{align}
If we fix the learning rate as ${\gamma ^t} = {\gamma ^0}  < \frac{{\frac{1}{N} - \sqrt {\kappa {\kappa _2}} }}{{L\kappa {\kappa _2} + L{\kappa _4}}}$, which is small enough so that ${{\gamma ^0}\left( {\frac{1}{N} - \sqrt {\kappa {\kappa _2}} } \right) - {{\left( {{\gamma ^0}} \right)}^2}\left( {L\kappa {\kappa _2} + L{\kappa _4}} \right)}>0$, under the condition $\sqrt {\kappa {\kappa _2}}  < \frac{1}{N}$, we can rewrite (\ref{smooth exp2}) as
\begin{align}
    \label{smooth exp3}
 & \mathbb{E}\left[ {{{\left\| {\nabla F\left( {{{\mathbf{x}}^t}} \right)} \right\|}^2}} \right] \nonumber\\
 \leqslant & \frac{{\mathbb{E}\left[ {F\left( {{{\mathbf{x}}^t}} \right)} \right] - \mathbb{E}\left[ {F\left( {{{\mathbf{x}}^{t + 1}}} \right)} \right]}}{{{\gamma ^0}\left( {\frac{1}{N} - \sqrt {\kappa {\kappa _2}} } \right) - {{\left( {{\gamma ^0}} \right)}^2}\left( {L\kappa {\kappa _2} + L{\kappa _4}} \right)}} \nonumber\\
 &+ \frac{{{\gamma ^0}\frac{{\kappa {\kappa _1}}}{{2\sqrt {\kappa {\kappa _2}} }} + {{\left( {{\gamma ^0}} \right)}^2}\left( {L\kappa {\kappa _1} + L{\kappa _3}} \right)}}{{{\gamma ^0}\left( {\frac{1}{N} - \sqrt {\kappa {\kappa _2}} } \right) - {{\left( {{\gamma ^0}} \right)}^2}\left( {L\kappa {\kappa _2} + L{\kappa _4}} \right)}}.
\end{align}
Averaging over $T$ iterations of (\ref{smooth exp3}), we can derive (\ref{converg cola}) based on Assumption~\ref{lower bound}, which completes the proof. 
   
\end{document}